\newif\ifshaphered

\ifdefined\isshaphered
\shapheredtrue
\fi

\documentclass[sigconf]{acmart}

\AtBeginDocument{%
  \providecommand\BibTeX{{%
    \normalfont B\kern-0.5em{\scshape i\kern-0.25em b}\kern-0.8em\TeX}}}

    \makeatletter
\def\@ACM@checkaffil{
    \if@ACM@instpresent\else
    \ClassWarningNoLine{\@classname}{No institution present for an affiliation}%
    \fi
    \if@ACM@citypresent\else
    \ClassWarningNoLine{\@classname}{No city present for an affiliation}%
    \fi
    \if@ACM@countrypresent\else
        \ClassWarningNoLine{\@classname}{No country present for an affiliation}%
    \fi
}
\makeatother

\setcopyright{acmcopyright}
\copyrightyear{2018}
\acmYear{2018}
\acmDOI{XXXXXXX.XXXXXXX}

\acmConference[Conference acronym 'XX]{Make sure to enter the correct
  conference title from your rights confirmation emai}{June 03--05,
  2018}{Woodstock, NY}
%
%
\acmPrice{15.00}
\acmISBN{978-1-4503-XXXX-X/18/06}

\usepackage{algorithm}
\usepackage{algorithmicx}
\usepackage[noend]{algpseudocode}
\usepackage{bbm}
\usepackage{soul}
\usepackage{diagbox}
\usepackage{enumerate}

\ifshaphered
\usepackage{longtable}
\newcommand\revision[1]{\textcolor{blue}{#1}}

\else
\newcommand\revision[1]{#1}

\fi

\begin{document}

\title{Local Differentially Private Heavy Hitter Detection in Data Streams with Bounded Memory}

\settopmatter{printacmref=false}


\author{Xiaochen Li}
\authornote{Work at The State Key Laboratory of Blockchain and Data Security.}
\email{xiaochenli@zju.edu.cn}
\affiliation{%
  \institution{Zhejiang University}
}

\author{Weiran Liu}
\email{weiran.lwr@alibaba-inc.com}
\affiliation{%
  \institution{Alibaba Group}
}

\author{Jian Lou}
\authornotemark[1]
\email{jian.lou@zju.edu.cn}
\affiliation{%
  \institution{Zhejiang University}
}

\author{Yuan Hong}
\email{yuan.hong@uconn.edu}
\affiliation{%
  \institution{University of Connecticut}
}

\author{Lei Zhang}
\email{zongchao.zl@taobao.com}
\affiliation{%
  \institution{Alibaba Group}
}

\author{Zhan Qin}
\authornotemark[1]
\authornote{Zhan Qin is the corresponding author.}
\email{qinzhan@zju.edu.cn}
\affiliation{%
  \institution{Zhejiang University}
}

\author{Kui Ren}
\authornotemark[1]
\email{kuiren@zju.edu.cn}
\affiliation{%
  \institution{Zhejiang University}
}

\renewcommand{\shortauthors}{Anonymous.}

\begin{abstract}
Top-$k$ frequent items detection is a fundamental task in data stream mining. Many promising solutions are proposed to improve memory efficiency while still maintaining high accuracy for detecting the Top-$k$ items. Despite the memory efficiency concern, the users could suffer from privacy loss if participating in the task without proper protection, since their contributed local data streams may continually leak sensitive individual information. However, most existing works solely focus on addressing either the memory-efficiency problem or the privacy concerns but seldom jointly, which cannot achieve a satisfactory tradeoff between memory efficiency, privacy protection, and detection accuracy.

In this paper, we present a novel framework HG-LDP to achieve accurate Top-$k$ item detection at bounded memory expense, while providing rigorous local differential privacy (LDP) protection. Specifically, we identify two key challenges naturally arising in the task, which reveal that directly applying existing LDP techniques will lead to an inferior ``accuracy-privacy-memory efficiency'' tradeoff. Therefore, we instantiate three advanced schemes under the framework by designing novel LDP randomization methods, which address the hurdles caused by the large size of the item domain and by the limited space of the memory. We conduct comprehensive experiments on both synthetic and real-world datasets to show that the proposed advanced schemes achieve a superior ``accuracy-privacy-memory efficiency'' tradeoff, saving $2300\times$ memory over baseline methods when the item domain size is $41,270$. Our code is open-sourced via the link.\footnote{\url{https://github.com/alibaba-edu/mpc4j/tree/main/mpc4j-dp-service}}
\end{abstract}



\keywords{Local differential privacy, heavy hitter, data stream processing}

\received{20 February 2007}
\received[revised]{12 March 2009}
\received[accepted]{5 June 2009}

\maketitle

\section{Introduction}
Detecting Top-$k$ frequent items in data streams is one of the most fundamental problems in streaming data analysis \cite{DBLP:journals/vldb/JinYCYL10, DBLP:journals/tods/LiuWY10, DBLP:conf/vldb/DasGKS07, DBLP:conf/sigmod/ChenC15, DBLP:journals/ton/BasatEKOVW22}. It forms the foundation for a multitude of critical applications across various domains, such as anomaly detection in data mining \cite{DBLP:journals/csur/ChandolaBK09}, click analysis in web analysis \cite{DBLP:journals/electronicmarkets/SteinfieldAL05}, and topic mining in social networks \cite{DBLP:conf/www/ZhangCYNL13}. In the typical decentralized setting as illustrated in Figure \ref{fig:heavy_exp}, the users send local item counts to the server in a streaming fashion, and the server continuously finds hot items (i.e., items with high-frequencies, as depicted in Figure \ref{fig:heavy_exp}) in the item domain based on all users' local streams. 
Apparently, the na\"ive solution to count and store all items ever appearing in the data streams will incur an \emph{overwhelming memory burden} for large domain sizes that are commonly encountered in practice.
\revision{
For example, as of 2023, there are over $1$ billion videos uploaded on the Youtube platform, with over $500$ hours of videos uploaded every minute \cite{youtube}.
It becomes evident that maintaining a histogram of the expanding item domain on the server side for identifying hot items is impractical.
} 
Many existing works thus focus on improving memory efficiency by designing advanced data structures, especially for applications where the domain size is too large to efficiently fit in the memory \cite{DBLP:conf/infocom/Ben-BasatEFK16, DBLP:conf/pods/YiZ09, DBLP:journals/pvldb/CormodeH08, DBLP:conf/vldb/CormodeKMS03, DBLP:conf/stoc/BravermanCIW16}. 

Furthermore, users' submitted streaming data often contain sensitive individual information, e.g., click analysis may reveal online behavior and topic mining may reveal political opinions. The privacy of users is under severe threat if they submit local data streams without proper privacy protection. 
In particular, the privacy concern has a unique characteristic in the Top-$k$ detection problem.
That is, the cold items (i.e., items with low frequencies, as depicted in Figure \ref{fig:heavy_exp}) are not statistical targets, but constitute the majority of the data domain and are particularly sensitive, as they reveal highly personal information specific to certain user groups.
Due to its central role in streaming data analysis, Top-$k$ frequent items detection has attracted significant research attention in recent years. 
However, most existing works pursue the memory efficiency or privacy protection goals separately but seldom jointly. 

\setlength{\textfloatsep}{0.1cm}
\begin{figure}[t]
	\centering
	\includegraphics[width=0.43\textwidth]{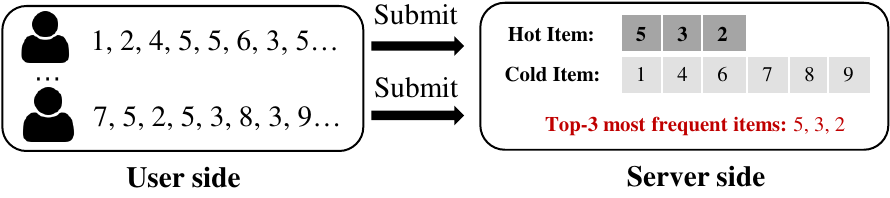}
 \vspace{-0.1in}
	\caption{An example of Top-$3$ frequent items detection.}
	\label{fig:heavy_exp}
\end{figure}

On the memory efficiency side, a series of approaches have been proposed to improve the memory efficiency with decent accuracy for detecting the Top-$k$ items  \cite{DBLP:conf/kdd/0003GZZSL18, DBLP:conf/sigmod/Zhou0J0YLU18, DBLP:conf/kdd/LiLXJ00DZ20, DBLP:conf/vldb/MankuM02, DBLP:conf/icdt/MetwallyAA05, DBLP:journals/pvldb/LiCZYC22}. 
The key rationale of the memory-saving stems from the fact that most items are cold while only a few items are hot in practical data streams \cite{DBLP:conf/sigmod/RoyKA16, cormode2011sketch}. Accurately recording the information of massive cold items not only wastes much memory, but also incurs non-trivial errors in hot item estimation when the memory is tight.  
Thus, existing methods seek to design a compact data structure to keep and guard the items and their frequencies of hot items, while possibly evicting cold items. 
One of the most widely adopted and effective data structures addressing this challenge is \emph{HeavyGuardian} \cite{DBLP:conf/kdd/0003GZZSL18}. 
It introduces the \emph{separate-and-guard-hot} design principle, which effectively segregates hot items from cold items, preserving the accuracy of hot item estimations.
\emph{HeavyGuardian} further delineates a specific strategy called \emph{Exponential Decay} (ED) to guard the hot items by exponentially decreasing the probability that the possible cold items remain in the heavy part of the data structure. However, despite achieving a promising balance between accuracy and memory efficiency, none of these methods simultaneously account for privacy concerns.

On the privacy protection side, Differential Privacy (DP) has been regarded as a \emph{de facto} standard by both academia and industry \cite{DBLP:conf/icalp/Dwork06, DBLP:journals/fttcs/DworkR14}.
In the decentralized data analytics setting, 
Local Differential Privacy (LDP) is the state-of-the-art approach extended from DP to the local setting, which has been widely deployed in industry, e.g., Google Chrome browser \cite{DBLP:conf/ccs/ErlingssonPK14} and Apple personal data collection \cite{greenberg2016apple}. 
In LDP, each user perturbs his/her data with a local randomization mechanism before sending it to the server. 
The server could still derive general statistics from the perturbed submissions with a certain accuracy decrease. General randomization mechanisms for frequency estimation such as Generalized Randomized Response (GRR), Optimal Local Hash (OLH) \cite{wang2017locally}, and Hadamard Response (HR) \cite{acharya2019hadamard}, can be applied to Top-$k$ items detection as baseline methods. 
There also exist many works designed specifically for heavy hitter estimation under LDP, including estimates over the single-valued data \cite{DBLP:conf/icml/GillenwaterJMD22, DBLP:conf/nips/BassilyNST17, DBLP:journals/tdsc/0001LJ21, DBLP:journals/pvldb/CormodeMM21}, and set-valued data \cite{DBLP:conf/ccs/QinYYKXR16, DBLP:conf/sp/WangLJ18}. 
However, it is noteworthy that these works neither address the data stream setting nor tackle the issue of memory efficiency.

In this paper, our objective is to bridge the gap between memory-efficient heavy hitter tracking in data streams and LDP privacy protection. To achieve this, we introduce the HG-LDP framework designed for tracking the Top-$k$ heavy hitters within data streams. This framework comprises three essential modules. 
First, the \emph{randomization module} is responsible for randomizing the streaming data generated by users, ensuring event-level LDP privacy that is more suitable for the streaming data \cite{DBLP:conf/stoc/DworkNPR10, perrier2018private}. 
Second, the \emph{storage module} records the incoming data on the server side. 
\revision{To this end, we integrate the $\emph{HeavyGuardian}$ data structure, and significantly optimize its implementations, i.e., dynamic parameter configuration, and sampling optimization (see details in Appendix \ref{app:imple-detail}) to facilitate the heavy hitter tasks and processing of LDP-protected noisy data. Finally, the \emph{response module} processes and publishes the statistical results of heavy hitters.}


\revision{
It is worth noting that directly applying existing LDP techniques cannot achieve satisfactory accuracy or would be even functionally infeasible, primarily due to the following two new challenges.} 

\revision{
\emph{Challenge (1): Incompatibility of Space-Saving Strategy and Large Domain Size for LDP.} 
To highlight this challenge, we instantiate a basic scheme BGR as a baseline (detailed in Section \ref{sec:basicldppd}), which directly uses the Generalized Randomized Response (GRR) mechanism \cite{wang2017locally} in the randomization module. The large domain size incurs two problems that jointly fail BGR: 1) the noise variance introduced by the GRR will increase as the data domain increases; 2) the space-saving strategy of the data structure introduces additional underestimation error to the noise items, which will be further amplified by the debiasing operation, required by LDP. 
Although existing mechanisms such as Optimal Local Hash (OLH) \cite{wang2017locally} and Hadamard Response (HR) \cite{acharya2019hadamard} in the LDP field aim to alleviate the impact of large data domains on randomized results' accuracy, it is crucial to emphasize that we still confront a unique and unaddressed challenge. 
We identified that the core idea of the LDP field in addressing this problem is to encode the large data domain into a smaller one for randomization.
However, the decoding of randomized data on the server side inevitably produces a multiple of diverse collision data, which can significantly disrupt the decision-making of the space-saving strategy.}

\revision{
\emph{Challenge (2): Dynamically Changing Hot/Cold Items}. 
Notably, cold items often constitute the majority of the data domain, and indiscriminately randomizing data across the entire domain can result in an unnecessary waste of privacy budget.
The ability to distinguish between hot items and cold items during the randomization process is crucial for enhancing the accuracy of hot item estimation. 
However, since the labels of hot and cold items may dynamically change as the data stream evolves, randomizing data based on the previous timestamp's state may introduce a huge bias towards the prior state. 
This poses several new challenges, e.g., how to strike a balance between reducing unnecessary privacy budget expenditure on cold items, and how to manage such dynamically emerging bias. 
Addressing this challenge also mandates novel LDP mechanism designs.
}


\vspace{0.05in}

\noindent\textbf{Contribution.} 
In this paper, we initiate a baseline method and propose three novel advanced LDP designs under the hood of a framework HG-LDP to address these hurdles. 
First, we present a baseline method that directly combines the GRR mechanism with \emph{HeavyGuardian} data structure.
Second, we propose a newly designed LDP mechanism. It is based on the observation that the ED strategy does not need to know the specific item of the incoming data in most cases if it is not recorded in the data structure. Third, we adjust the noise distribution by dividing the privacy budget to achieve higher accuracy. Finally, we utilize the light part of \emph{HeavyGuardian} to elect current cold items before they become new hot items, which further improves the accuracy of the estimated result. The main contributions are summarized as follows.
\begin{itemize}
  \item To our best knowledge, this paper is the first to track the Top-$k$ frequent items from data streams in a bounded memory space while providing LDP protection for the sensitive streaming data. 
  We present a general framework called HG-LDP to accommodate any proper LDP randomization mechanisms on the users' side into the space-saving data structures on the server side for the task.

\vspace{0.05in}
  
  \item By investigating the failure of na\"ively combining existing LDP techniques with HG-LDP, we design three new LDP schemes, which achieve a desired tradeoff performance between accuracy, privacy, and memory efficiency.

\vspace{0.05in}
  
  \item We comprehensively evaluate the proposed schemes on both synthetic and real-world datasets in terms of accuracy and memory consumption, which shows that the proposed schemes achieve higher accuracy and higher memory efficiency than baseline methods. For instance, when the size of the domain size reaches $41,270$, the proposed schemes save about $2300\times$ size of memory over baselines.
\end{itemize}

\section{Preliminaries}
\subsection{Problem Statement}
We consider the setting of finding Top-$k$ items in data streams under Local Differential Privacy (LDP).
Given $n$ users, each user generates a private infinite data stream.
Denote $v_i^t\in \Omega$ as the data generated by the user $u_i$ at timestamp $t$. The user only sends data at the timestamp when data is generated. 
A server collects values from users at each timestamp $t$.
Note that the server can only maintain a data structure with a length much smaller than the size $d$ of data domain $\Omega$ due to its limited memory space.
Whenever a query is received, the server needs to publish the Top-$k$ items up to the latest timestamp and their counts.

\subsection{Privacy Definitions}
In this paper, we provide event-level privacy guarantee \cite{DBLP:conf/stoc/DworkNPR10,perrier2018private, wang2021continuous,DBLP:conf/ccs/ChenMHM17, DBLP:conf/icalp/ChanSS10}.
Specifically, the event-level LDP ensures the indistinguishability of any pairs of elements in streams, e.g., every single transaction remains private in a user's long-term transactions:
\begin{definition}[Local Differential Privacy (LDP) \cite{DBLP:conf/focs/KasiviswanathanLNRS08}]
  An algorithm $\mathcal{M}$ satisfies $\epsilon$-LDP, where $\epsilon \geq 0$, if and only if for any input $v, v' \in \mathbb{D}$, and any output $y \in Range(\mathcal{M})$, we have 
  \[\setlength\abovedisplayskip{0.5ex}
     \setlength\belowdisplayskip{0.5ex}
  \Pr \left[\mathcal{M}(v) = y \right] \leq e^{\epsilon} \Pr \left[\mathcal{M}(v') = y \right].\]
  \label{def:local-differential-privacy} 
\end{definition}
The parameter $\epsilon$ is called the \emph{privacy budget}, whereby smaller $\epsilon$ reflects stronger privacy guarantees. We say $\mathcal{M}$ satisfies $\epsilon$-LDP if for different data $v$ and $v'$, the ratio of distribution of output $\mathcal{M}(v)$ and that of $\mathcal{M}(v')$ are not greater than $e^{\epsilon}$. 

\subsection{LDP Mechanisms}
The Randomized Response (RR) mechanism \cite{warner1965randomized} is considered to be the first LDP mechanism that supports binary response.
It allows each user to provide a false answer with a certain probability so as to provide plausible deniability to users. The Generalized Randomized Response (GRR) mechanism \cite{wang2017locally} is an extension of Randomized Response (RR) \cite{warner1965randomized}, which supports multi-valued domain response.
Denote $d$ as the size of the domain $\mathbb{D}$.
Each user with private value $v \in \mathbb{D}$ reports the true value $v' = v$ with probability $p$ and reports a randomly sampled value $v' \in \mathbb{D}$ where $v' \neq v$ with probability $q$.
The probability $p$ and $q$ are defined as follows
\begin{equation}
	\label{eq:rr}
	\left\{
	\begin{aligned}	
		p & = \frac{e^{\epsilon}}{e^{\epsilon} + d - 1}, \\
		q & = \frac{1}{e^{\epsilon} + d - 1}.
	\end{aligned}
	\right. 
\end{equation}
where $d$ is the size of the data domain.
It is straightforward to prove $\epsilon$-LDP for GRR, i.e., $p/q\le e^\epsilon$ \cite{wang2017locally}.
Assuming that each of the $n$ users reports one randomized value. Let $\hat c_i$ be the number of value $i$ occurs in randomized values, the estimation of true number $\tilde c_{i}$ of value $i$ can be computed with 
$\tilde c_i = \frac{\hat c_i - nq}{p-q}$. 
The variance of the estimated result $\tilde c_{i}$ is 
$Var[\tilde c_{i}] = n\cdot\frac{d-2+e^\epsilon}{(e^\epsilon-1)^2}$.

As shown above, the variance of the estimation result of the GRR mechanism increases linearly with the increase of $d$.
Some other mechanisms, e.g., Optimal Local Hash (OLH) \cite{wang2017locally} and Hadamard Response (HR) \cite{acharya2019hadamard}, are proposed to randomize data in a large data domain.
Essentially, they map the data to a smaller domain before randomizing it to avoid the large variance caused by a large data domain.
We defer their details to Appendix \ref{app:ldp-mechanism}.

\subsection{Space-Saving Data Structure}
\label{sec:hg}
\revision{
Counter-based data structures \cite{DBLP:conf/vldb/MankuM02, DBLP:conf/icdt/MetwallyAA05, DBLP:conf/kdd/0003GZZSL18, DBLP:conf/sigmod/Zhou0J0YLU18} and sketches \cite{alon1999space, cormode2005improved, DBLP:conf/kdd/LiLXJ00DZ20, chi2004moment} are two kinds of mainstream memory-efficient data structures.
While sketches have been extensively studied as compressed data structures for frequency estimation, they may not be the optimal choice when it comes to heavy hitter estimation in data streams, particularly in scenarios characterized by limited storage space and real-time response requirements. 
This preference is underpinned by two key reasons: 
Firstly, sketches record counts for all items, whereas heavy hitter tasks only concern hot items. 
This equally treated recording of all counts results in unnecessary memory consumption. 
For example, the Count-Min sketch (CMS) necessitates a minimum of $O(\frac{N}{\alpha} \times \log(1/\delta))$ space to guarantee that the probability of error in the estimated count of each item being less than $\alpha$ is no less than $1-\delta$, with $N$ representing the total data count \cite{cormode2005improved}. 
Furthermore, as highlighted by Cormode and Hadjieleftheriou in \cite{DBLP:journals/pvldb/CormodeH08}, sketches require additional storage for finding the counts of hot items. 
For instance, $O(\frac{N}{\alpha} \log d \log\delta)$ space increase is incurred when using group testing to find hot items, or a minimum of $O(d)$ computational overhead is needed for hot item retrieval.}

\revision{
Thus, in this paper, we choose to employ a counter-based data structure called \emph{HeavyGuardian} proposed by Yang et al. \cite{DBLP:conf/kdd/0003GZZSL18} as the foundation for our framework.
It identifies and records the high-frequency items in subsequent data streams based on observations of historical streaming data.
}
The basic version of \emph{HeavyGuardian} is a hash table with each bucket storing several KV pairs ($\langle ID, count\rangle$) and small counters. Specifically, each bucket is divided into two parts: a heavy part with a length of $\lambda_h$ ($\lambda_h>0$) to precisely store counts of hot items, and a light part with a length of $\lambda_l$ ($\lambda_l$ can be $0$) to approximately store counts of cold items. For each incoming item $e$, \emph{HeavyGuardian} needs to decide whether and how to insert it into the heavy part of a bucket according to a strategy called Exponential Decay (ED).
There are three cases when inserting an item $e$ into the heavy part of \emph{HeavyGuardian}. 
\begin{enumerate}[\emph{Case} 1:]
  \item The KV pair of $e$ has been stored in the heavy part, it increments the corresponding $count$ by $1$.

\vspace{0.05in}
  
  \item The KV pair of $e$ is not in the heavy part, and there are still empty buckets. 
        It inserts the KV pair of $e$ into the heavy part and sets the $count$ to $1$.

\vspace{0.05in}
        
  \item The KV pair of $e$ is not in the heavy part, and there is no empty bucket.
        It decays $1$ from the current least $count$ in the heavy part with probability $\mathcal{P}=b^{-c}$, where $b$ is a predefined constant number (b=1.08 in \cite{DBLP:conf/kdd/0003GZZSL18}), and $c$ is the $count$ value.
        After decay, if the $count$ becomes $0$, it replaces this KV pair (the weakest KV pair) with $e$'s KV pair, and sets the $count$ to $1$.
\end{enumerate}

If $e$ is not successfully inserted into the heavy part, it is recorded in the light part.
Since the heavy hitter tasks only focus on Top-$k$ items and their counts, we set the parameters of \emph{HeavyGuardian} as the number of buckets $w=1$, the length of the heavy part $\lambda_h=k$, and the length of light part $\lambda_l=0$ (except in one of the proposed scheme CNR).
For simplicity of description, we denote the data structure of \emph{HeavyGuardian} as $\mathcal{HG}$ in the following sections.
We use $\mathcal{HG}[i]$ to denote the $i^\text{th}$ key pair in $\mathcal{HG}$, and use $\mathcal{HG}[i].ID$ and $\mathcal{HG}[i].C$ to denote the ID and the count of an item, respectively. 

\section{HG-LDP for Heavy Hitters Tracking}
\revision{
In this section, we first introduce the HG-LDP framework for tracking heavy hitters in data streams with bounded memory space.
Then, we instantiate a baseline to highlight key obstacles for achieving a satisfactory ``\emph{accuracy-privacy-memory efficiency}'' tradeoff.
}
\subsection{Overview}

\begin{figure}[!h]
	\centering
	\includegraphics[width=0.43\textwidth]{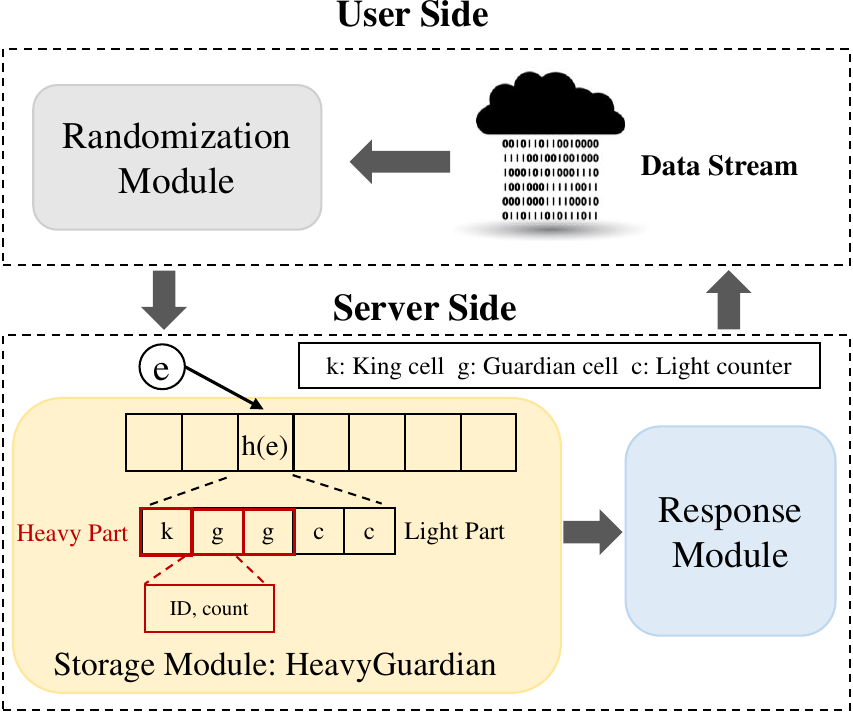}
	\setlength{\belowcaptionskip}{2ex}
	\caption{The overview of HG-LDP.}\vspace{-0.15in}
	\label{fig:heavyguardsys}
\end{figure}

Figure \ref{fig:heavyguardsys} illustrates the framework for HG-LDP, which contains three modules: \emph{randomization module}, \emph{storage module}, and \emph{response module}. The \emph{randomization module} runs on the user side to randomize the users' sensitive streaming data. The \emph{storage module} and the \emph{response module} run on the server side, where the \emph{storage module} utilizes a space-saving data structure. 

In this paper, we aim to adapt and optimize the \emph{HeavyGuardian} ($\mathcal{HG}$) data structure due to its popularity and simplicity, but expect our LDP designs to be generalizable to more sophisticated space-saving data structures in the future. The randomized streaming data continuously reported by users is stored in $\mathcal{HG}$ following the \emph{ED strategy}, and the statistical results are released by the \emph{response module} after \emph{debiasing}. Specifically, the functions of the three modules can be summarized as the following three algorithmic components:

\begin{itemize}
    \item \textsc{Randomize}. It is executed in the \emph{randomization module}. It takes raw data $v_i^t$ of the $i^{\text{th}}$ user at timestamp $t$ as input, and outputs a randomized data $r_i^t$ that satisfies LDP.

\vspace{0.05in}
    
    \item \textsc{Insert}. It is executed in the \emph{storage module}. It inserts the randomized data $r_i^t$ into $\mathcal{HG}$ following the \emph{ED strategy}, and updates the counts of the KV pairs in $\mathcal{HG}$.

\vspace{0.05in}
    
    \item \textsc{Response}. It is executed in the \emph{response module}. It obtains the hot items and their corresponding counts from $\mathcal{HG}$ when receiving a request. Then it maps them to a list for publishing after debiasing all counts.
\end{itemize}

\revision{In the following sections, we first instantiate a baseline scheme, and then propose three advanced schemes based on this framework by elaborately designing algorithms for the three modules.}

\subsection{A Baseline Scheme: BGR}
\label{sec:basicldppd}
We first discuss a baseline scheme BGR (Basic Scheme Combining GRR) that directly integrates an existing LDP scheme: GRR. 

\medskip
\noindent\textbf{Algorithms.} 
The BGR algorithm is outlined in Algorithm \ref{alg:LdpPD}. At timestamp $t$, the data $v_i^t$ of a user $u_i$ is randomized using GRR, and the resulting randomized value $r_i^t$ is then transmitted to the server. Subsequently, the server incorporates $r_i^t$ into the data structure $\mathcal{HG}$ following the \emph{ED strategy}.
Note that the counts stored within $\mathcal{HG}$ are consistently biased noisy values. To mitigate this, the server debiases all counts in the \emph{response module} following the standard GRR debiasing approach \cite{wang2017locally} before publishing the statistical outcomes.


\setlength{\floatsep}{0.1cm}
\setlength{\textfloatsep}{0.1cm}
\begin{algorithm}[!h]
  \caption{BGR (baseline)}
  \label{alg:LdpPD}
  {\small{
  \begin{algorithmic}[1]
        \Require {timestamp $t$, data domain $\Omega$ with size $d$, data structure $\mathcal{HG}$, number of the received data $num$.}
        \Ensure {$ResponseList$}
        \Statex {\textsc{Randomize}}
        \State {Obtain the current raw data $v_i^t$;}
        \State {$r_i^t\leftarrow$ GRR($v_i^t$, $\epsilon$)} \Comment{Randomize data with GRR.}
        \Statex{\textsc{Insert}}
        \State {Receive an incoming data $r_i^t$;}
        \State {$num\leftarrow num+1$}
        \State {Insert $r_i^t$ into $\mathcal{\mathcal{HG}}$ following ED strategy;}
        \If {the least count $\mathcal{HG}[k].C$ $\le 0$}
        \State {Replace the weakest KV pair with new KV pair $<r_i^t, 1>$}
        \EndIf
        \vspace{-0.03in}
        \Statex{\textsc{Response}}
        \State {$p= \frac{e^{\epsilon}}{e^{\epsilon} + d - 1}$, $q= \frac{1}{e^{\epsilon} + d - 1}$}
        \If {receive a Top-$k$ query}
        \For {each $\mathcal{HG}[j]\in \mathcal{HG}$}
        \State{$ResponseList[j].ID\leftarrow\mathcal{HG}[j].ID$}
        \State {$ResponseList[j].C\leftarrow (\mathcal{HG}[j].C-num\cdot q)/(p-q)$}
        \EndFor
        \EndIf\\
        \Return {$ResponseList$}
  \end{algorithmic}}}
\end{algorithm}

\vspace{-0.05in}

\noindent\textbf{Theoretical Analysis.}
Next, we theoretically analyze the error bound of the frequency estimated by BGR. Part of the error comes from the exponential decay of the counts on the server when the coming data is not recorded in $\mathcal{HG}$. Another part of the error comes from the noise introduced by the \emph{randomization module} to perturb the data with the GRR. We first give the error analysis for the \emph{ED strategy} of $\mathcal{HG}$ provided by Yang et al. \cite{DBLP:conf/kdd/0003GZZSL18} in Lemma \ref{lem:pd_error} below.

\begin{lemma}
  \label{lem:pd_error}
  Given a stream prefix $S_t$ with $t$ items in $\Omega$, it obeys an arbitrary distribution and $|\Omega|=d$.
  We assume that there are $w$ buckets to store the hottest $\lambda$ items mapped to them, each item is mapped to a bucket with the probability of $\frac{1}{w}$.
  Let $v_i$ be the $i^{\text{th}}$ hottest item, $f_i$ be the real frequency of $v_i$, and $\tilde f_i$ be the estimated frequency of $v_i$.
  Given a small positive number $\alpha$, we have
  \[Pr[f_i-\tilde f_i\ge \alpha t]\le \frac{1}{2\alpha t}(f_i-\sqrt{f_i^2-\frac{4P_{weak}E(V)}{b-1}})\]
  where $P_{weak}=e^{-(i-1)/w}\times(\frac{i-1}{w})^{l-1}/(l-1)!$, $E(V)=\frac{1}{w}\sum_{j=i+1}^d f_j$.
\end{lemma}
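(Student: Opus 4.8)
The plan is to reduce the tail bound to a Markov inequality on the number of decay operations charged to $v_i$'s counter, and then to pin down that expectation with a short self-referential computation, following the original HeavyGuardian analysis.

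First I would note that the exponential-decay rule never inflates a counter: the slot holding $v_i$ is incremented only when $v_i$ itself arrives, and is decremented only when the decay coin fires while $v_i$ is the least-count entry of its bucket's heavy part; no arrival of another item can raise $v_i$'s stored count. Hence the recorded value satisfies $\tilde f_i=f_i-V$, where $V\ge 0$ is the total number of $v_i$-increments that are eventually erased by decays (this also covers the case where $v_i$ is decayed to $0$, replaced, and later re-inserted). Therefore $f_i-\tilde f_i=V\ge 0$, and Markov's inequality gives $\Pr[f_i-\tilde f_i\ge\alpha t]=\Pr[V\ge\alpha t]\le E[V]/(\alpha t)$. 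It then suffices to show $E[V]\le\tfrac12\bigl(f_i-\sqrt{f_i^2-\tfrac{4P_{weak}E(V)}{b-1}}\bigr)$, which reproduces the claimed inequality.

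Next I would estimate $E[V]$ by decomposing a single decay of $v_i$ into three ingredients, using the lemma's hypothesis that items are hashed to buckets uniformly and independently. (a) $v_i$ must currently be the weakest entry of its bucket's heavy part; modeling the number of strictly hotter items among $v_1,\dots,v_{i-1}$ routed to that bucket as $\mathrm{Poisson}((i-1)/w)$ and observing that $v_i$ is the weakest precisely when the heavy part is full with $v_i$ its least-frequent occupant, i.e. exactly $l-1$ hotter items share the bucket ($l$ the heavy-part length), this event has probability $P_{weak}=e^{-(i-1)/w}(\tfrac{i-1}{w})^{\,l-1}/(l-1)!$. (b) A decay attempt on $v_i$ is triggered only by the arrival of a colder item $e\ne v_i$ routed to $v_i$'s bucket and not itself stored in the heavy part; the expected number of such arrivals over the stream is $\tfrac1w\sum_{j=i+1}^{d}f_j=E(V)$. (c) Each attempt succeeds with probability $b^{-c}$, where $c$ is $v_i$'s current recorded count; as that count ranges over the levels $1,\dots,\tilde f_i=f_i-E[V]$ it actually attains and the $E(V)$ external arrivals split (in expectation) evenly across those $\tilde f_i$ levels, this factor contributes $\tfrac{1}{f_i-E[V]}\sum_{c\ge1}b^{-c}=\tfrac{1}{(b-1)(f_i-E[V])}$. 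Multiplying the three ingredients gives $E[V]=\dfrac{P_{weak}E(V)}{(b-1)(f_i-E[V])}$, equivalently $E[V](f_i-E[V])=\dfrac{P_{weak}E(V)}{b-1}$.

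Finally I would solve the quadratic $E[V]^2-f_iE[V]+\tfrac{P_{weak}E(V)}{b-1}=0$; since $0\le E[V]\le f_i$ we keep the smaller root $E[V]=\tfrac12\bigl(f_i-\sqrt{f_i^2-\tfrac{4P_{weak}E(V)}{b-1}}\bigr)$ (the discriminant is nonnegative exactly in the hot-item regime where the bound is informative), and substitute into the Markov estimate; any over-counting in (a)--(c) only lowers $E[V]$, so the inequality is preserved. The step I expect to be the main obstacle is (b)--(c): justifying that $E[V]$ obeys this exact self-referential identity. The Poisson model for the event that $v_i$ is the weakest heavy-part entry tacitly assumes independent uniform hashing and that frequency rank matches stored-count rank, and the reduction of the count-dependent coin $b^{-c}$ together with the interleaving of $v_i$'s own arrivals and the external arrivals to the clean factor $1/((b-1)(f_i-E[V]))$ is where the averaging is least rigorous. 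I would handle this exactly as Yang et al. do, adopting the uniform-hashing, even-interleaving idealization declared in the lemma's hypotheses and taking the resulting identity as the bound being certified.
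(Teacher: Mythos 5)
The paper does not actually prove this lemma: it is imported verbatim from Yang et al.~\cite{DBLP:conf/kdd/0003GZZSL18}, so there is no in-paper proof to compare against. Your reconstruction follows the original HeavyGuardian derivation in essentially the same way --- Markov's inequality on the expected decayed count, the Poisson model giving $P_{weak}$, the self-referential identity yielding the quadratic, and selection of the smaller root --- and you correctly flag that the averaging of the $b^{-c}$ coin across count levels and the uniform-hashing/rank-matching idealizations are heuristic rather than rigorous, exactly as in the source. The only caution is notational: the lemma's $E(V)$ is the fixed quantity $\frac{1}{w}\sum_{j>i}f_j$, distinct from the expectation of your decay count $V$; you keep the two apart, but the clash is worth renaming if this argument were written out in full.
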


Our theoretical analysis follows the conclusion provided in Lemma \ref{lem:pd_error}.
In fact, Lemma \ref{lem:pd_error} only considers the bias caused by \emph{exponential decays} after the items are recorded as hot items, ignoring the count loss before items are recorded.
However, this count loss is strongly related to the distribution of the data stream and the order of the data arrival, so it's difficult to be theoretically analyzed.
Besides, as we mentioned in Section \ref{sec:hg}, we set the number of buckets $w=1$ in this paper since we only track Top-$k$ heavy hitters and $k$ is a small constant. 
Therefore, we only use the result when $w=1$ in Lemma \ref{lem:pd_error} and we show the error bound of BGR in Theorem \ref{the:LdpPD}.

\begin{figure*}[t]
	\centering
	\includegraphics[width=1\textwidth]{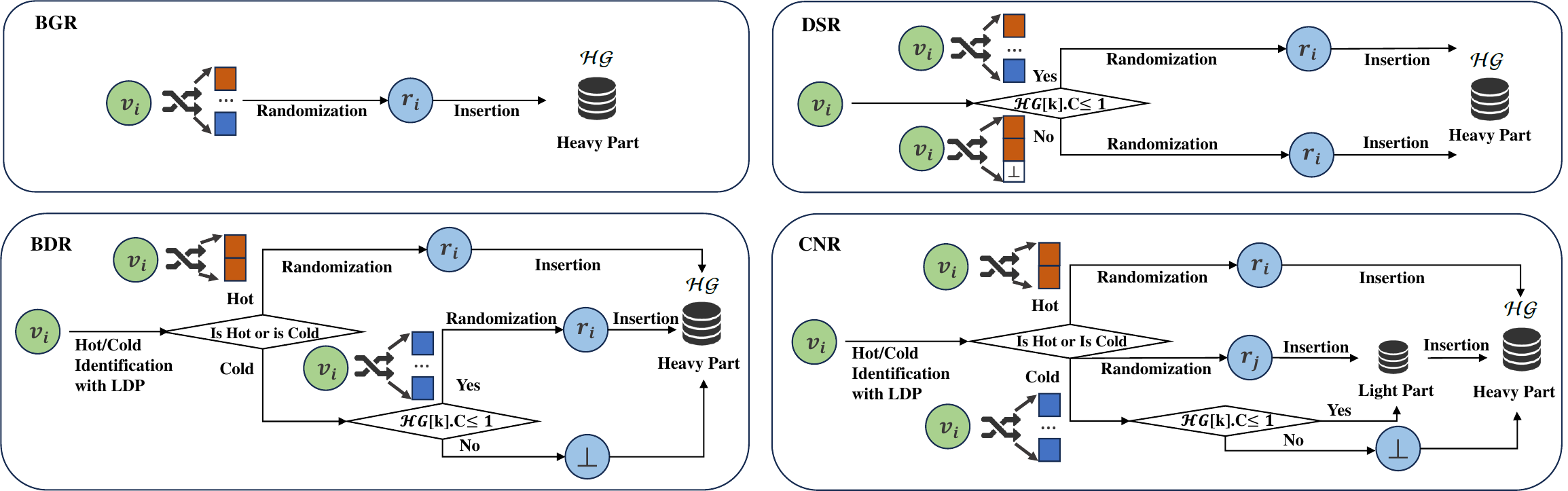}
	\vspace{-0.2in}
	\setlength{\belowcaptionskip}{2ex}
	\caption{The flowcharts of the randomization and storage modules in BGR (baseline) and three advanced schemes. \emph{Note that each subfigure only shows the procedures of one scheme for a single user. The system procedures are more complicated since a large number of users would frequently/concurrently submit data to the server and update the $\mathcal{HG}$ (the domain for randomization frequently changes). A debiasing procedure is also included in the response module of the server}. 
 DSR employs a strategy of randomization within a reduced domain when there are no imminent hot item evictions. BDR mitigates the impact of expansive cold domains on the accuracy of hot item estimates by splitting the privacy budget, which also eliminates the need for switching randomization strategies in DSR. CNR fully utilizes the idle privacy budget in BDR and elects new hot items with more potential to enter $\mathcal{HG}$.\color{black}}\vspace{-0.2in}
	\label{fig:schemes}
\end{figure*}

\begin{theorem}
  \label{the:LdpPD}
  Given a stream prefix $\hat S_t$ with $t$ items randomized by BGR satisfying $\epsilon$-LDP and there is a data structure $\mathcal{HG}$ to store the Top-$k$ items.
  Let $v_i$ be the $i^{\text{th}}$ hottest item, $f_i$ be the real frequency of $v_i$, $\tilde f_i$ be the final estimated frequency of $v_i$.
  We have
  \begin{align*}
    Pr[f_i-\tilde f_i&\le(\sqrt{2t\log (2/\beta)}+\alpha t)\cdot\frac{e^{\epsilon}+d-1}{e^{\epsilon}-1}]\\
    &\ge(1-\beta)(1-\frac{1}{2\alpha}(1-\sqrt{1-\frac{4P_{weak}E(V)}{b-1}}))\\
  \end{align*}
  where $P_{weak}=\frac{(i-1)!(d-k)!}{(d-1)!(i-k)!}$, $E(V)=\sum_{j=i+1}^d f_j$, $\alpha$ and $\beta$ are small positive numbers with $\alpha,\beta\in(0,1)$.
\end{theorem}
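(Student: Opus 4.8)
The plan is to decompose the total error $f_i - \tilde f_i$ into two additive contributions and bound each with its own high-probability argument, then combine via a union bound. Write $f_i - \tilde f_i = (f_i - \hat f_i) + (\hat f_i - \tilde f_i)$, where $\hat f_i$ denotes the (noisy) count that $\mathcal{HG}$ would record for $v_i$ \emph{before} debiasing, interpreted on the randomized stream $\hat S_t$, and $\tilde f_i$ is the debiased estimate output by the \textsc{Response} module. The first term captures the loss incurred by the exponential-decay storage strategy operating on the randomized stream; the second captures the statistical error of the GRR randomization together with the linear debiasing map $x \mapsto (x - t q)/(p-q)$.

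First I would handle the randomization/debiasing term. On the randomized stream, the number of reports equal to $v_i$ is a sum of $t$ independent Bernoulli-type indicators (each user either reports $v_i$ truthfully with probability $p$ if $v_i$ is their value, or reports it as a collision with probability $q$ otherwise), so its expectation is exactly $p f_i + q(t - f_i)$ and the unbiased estimator recovers $f_i$ in expectation. A Hoeffding bound on this sum of $t$ bounded independent terms gives that with probability at least $1-\beta$ the pre-debiasing count deviates from its mean by at most $\sqrt{(t/2)\log(2/\beta)}$; after dividing by $p - q = (e^\epsilon - 1)/(e^\epsilon + d - 1)$, this becomes a deviation of at most $\sqrt{2t\log(2/\beta)} \cdot \frac{e^\epsilon + d - 1}{e^\epsilon - 1}$ — matching the first summand inside the probability bracket (up to the constant in the log term, which is where one should be a little careful with the one-sided vs. two-sided Hoeffding constant).

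Next I would invoke Lemma~\ref{lem:pd_error} with $w = 1$ and $\lambda = l = k$ applied to the randomized stream $\hat S_t$. Setting $w=1$ collapses $P_{weak} = e^{-(i-1)/w}(\tfrac{i-1}{w})^{l-1}/(l-1)!$; here one substitutes the exact combinatorial expression $P_{weak} = \frac{(i-1)!\,(d-k)!}{(d-1)!\,(i-k)!}$ claimed in the statement (this is the probability that the $i$-th hottest item is among the $k$ weakest of the relevant items under the uniform-mapping model with a single bucket, replacing the Poisson-type approximation of the original lemma by an exact hypergeometric count) and likewise $E(V) = \sum_{j=i+1}^d f_j$ with the $1/w$ factor gone. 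Lemma~\ref{lem:pd_error} then yields $\Pr[f_i - \hat f_i \ge \alpha t] \le \frac{1}{2\alpha}\bigl(1 - \sqrt{1 - \tfrac{4P_{weak}E(V)}{b-1}}\bigr)$, so with the complementary probability we have $f_i - \hat f_i < \alpha t$, contributing the additive $\alpha t$ term before the same $\frac{e^\epsilon+d-1}{e^\epsilon-1}$ scaling is applied in the worst-case combination.

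Finally I would combine the two events. Conditioned on both good events — the Hoeffding event (probability $\ge 1-\beta$) and the exponential-decay event (probability $\ge 1 - \frac{1}{2\alpha}(1 - \sqrt{1 - \frac{4P_{weak}E(V)}{b-1}})$) — the triangle inequality on $f_i - \tilde f_i$, together with the monotone linear debiasing rescaling by $(e^\epsilon + d - 1)/(e^\epsilon - 1)$, gives $f_i - \tilde f_i \le (\sqrt{2t\log(2/\beta)} + \alpha t)\cdot \frac{e^\epsilon + d - 1}{e^\epsilon - 1}$; since the two events are over independent sources of randomness (user coin flips for GRR vs. the server's decay coins and item ordering), their intersection has probability at least the product of the two bounds, which is exactly the right-hand side claimed. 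The main obstacle I anticipate is bookkeeping the interaction between the storage error and the rescaling: the decay bound in Lemma~\ref{lem:pd_error} is stated for \emph{frequencies} relative to the stream, and one must argue carefully that applying the (data-independent) affine debiasing map to a count that has been both noised and decayed does not mix the two error sources in a way that defeats the simple additive/multiplicative split — i.e., that it is legitimate to bound the decay loss on the randomized stream first and then scale, rather than the two effects compounding. A secondary subtlety is justifying that the exact combinatorial $P_{weak}$ is the correct replacement for the approximate expression in the cited lemma, and that the argument of the square root stays nonnegative.
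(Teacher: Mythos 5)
Your proposal follows essentially the same route as the paper's proof: the same two-part decomposition into an exponential-decay storage error (bounded by invoking Lemma~\ref{lem:pd_error} with $w=1$, contributing the $\alpha t$ term) and a GRR randomization error (bounded by Hoeffding on the binomial decomposition of the noisy count, contributing the $\sqrt{2t\log(2/\beta)}$ term), followed by scaling both through the affine debiasing map by $\frac{e^{\epsilon}+d-1}{e^{\epsilon}-1}$ and multiplying the two success probabilities. The only cosmetic differences are that the paper splits the Hoeffding step into two bounds on $Bin(f_i,p)$ and $Bin(t-f_i,q)$ combined by Bonferroni (which is where the factor of $2$ in $2\zeta=\sqrt{2t\log(2/\beta)}$ comes from), and it names the decayed recorded count $\bar f_i$ separately from the ideal noisy count $\hat f_i$ --- precisely the bookkeeping subtlety you flag at the end, resolved exactly as you anticipate.
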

\begin{proof}
\revision{
We assume that $\hat f_i$ is the frequency of noisy data recorded in $\mathcal{HG}$ according to the \emph{ED strategy}.
Meanwhile, due to the ED strategy introducing additional errors during the recording of $\hat f_i$, the frequency used for debiasing by the GRR mechanism before publication is denoted as $\bar f_i$.
Then the error bound of final debiased frequency $\tilde f_i$ compared to $f_i$ can be obtained by combining the error bounds of $\hat f_i - \bar f_i$ and $f_i-\frac{\hat f_i-tq}{p-q}$.
The detailed proof is deferred to Appendix \ref{app:ldppd_the}.}
\end{proof}

\noindent\textbf{Problems with Existing LDP mechanisms.}
Theorem \ref{the:LdpPD} shows that the error bound of BGR grows proportionally to the size of the data domain $d$.
While BGR is sufficient for solving the task of finding Top-$k$ items in streaming data with a small data domain, it can inevitably fall into the dilemma that the error is too large when dealing with a large data domain. From the proof of Theorem \ref{the:LdpPD}, we can find that the excessive error caused by the large data domain mainly comes from the randomization process of the GRR.

\revision{
Several LDP mechanisms have been proposed to address data randomization in large data domains. However, directly integrating these mechanisms with \emph{HeavyGuardian} is still problematic in practice. The core concept behind these mechanisms revolves around mapping data from the large data domain to a smaller data domain using techniques such as hash functions \cite{wang2017locally}, Hadamard matrix encoding \cite{acharya2019hadamard, acharya2019communication}, or Bloom filter encoding \cite{erlingsson2014rappor}. Subsequently, data is randomized within this reduced data domain.
}

\revision{
There are several issues with these approaches.
Firstly, decoding a single randomized data on the server side  entails an exhaustive scan of the entire data domain, which becomes computationally expensive for large data domains. 
Furthermore, this approach implies that the server must store the entirety of the data domain, which may contradict the requirement for bounded memory consumption on the server side. 
Additionally, these mechanisms introduce collisions when decoding randomized data for analysis. While such collisions are typically manageable in general frequency estimation tasks due to their uniform distribution, they can render strategies like the ED strategy and other space-saving techniques unusable. Assuming that data is mapped from a large domain of size $d$ to a smaller data domain of size $g$, the average number of collision data generated by decoding a data point is $d/g$. In essence, the arrival density of an item directly impacts its potential to be recorded within the data structure as a hot item. If decoded data is mixed with $d/g-1$ different data points, the true hot item may lose its advantage in being recorded within $\mathcal{HG}$. In scenarios where the domain size $d$ is extremely large, such that $d/g$ surpasses the size $k$ of $\mathcal{HG}$, the entire scheme becomes untenable, and all data points are indiscriminately recorded with equal probability. 
}

\revision{
Consequently, it is desirable to develop novel LDP mechanisms capable of effectively randomizing data within large data domains and addressing challenges posed by the dynamically changing hot/cold items while optimizing the performance of \emph{HeavyGuardian}.}

\section{Advanced LDP Mechanism Designs}
\revision{
In this section, we propose three novel advanced schemes to address the aforementioned problem in BGR by designing new randomization methods, which are outlined 
in Figure \ref{fig:schemes}.}

\subsection{DSR (Domain-Shrinkage Randomization)}
\label{sec:dsr}
Tasks involving heavy hitter estimation in streams often assume that the streaming data follows a Zipf distribution \cite{DBLP:journals/tods/CormodeM05, DBLP:journals/pvldb/CormodeH08, DBLP:conf/vldb/MankuM02}. This assumption aligns well with the distribution observed in various real-world scenarios, such as purchased goods and popular songs.
In these contexts, the data domain predominantly consists of a few frequently occurring hot items, while most items are relatively rare or never appear.
However, the GRR mechanism in BGR randomizes a large number of hot items to these rare items for ensuring LDP, which leads to poor performance of the \emph{ED strategy}. 
Furthermore, the protection of these rare items is critical since they often contain highly sensitive information. 
For instance, an individual might not be concerned about others knowing they have watched popular movies but may be apprehensive about revealing their interest in niche films, as it could inadvertently expose their personal preferences and hobbies. It is based on these observations that we have designed our advanced algorithm, DSR.

Specifically, we refer to the items recorded in $\mathcal{HG}$ as the hot items, and the items not in $\mathcal{HG}$ as the cold items. 
We observe that the \emph{ED strategy} of $\mathcal{HG}$ does not need to know the specific value of the cold items in most cases. It only needs to reduce the count of the KV pair with the lowest frequency in $\mathcal{HG}$ by $1$ with a certain probability when it receives a cold item. The \emph{ED strategy} needs to know the specific value of the cold item to replace the KV pair in $\mathcal{HG}$ only when the weakest KV pair (with the lowest frequency) is going to be evicted. A direct idea is to represent all cold items as ``$\bot$'', and randomize the data on the domain $\{\mathcal{HG}.C\}\cup\{\bot\}$. When the weakest KV pair in $\mathcal{HG}$ is about to be evicted, it changes back to BGR to randomize the data on the entire domain. In this way, the size of the data domain can be reduced from $d$ to $k+1$ when there is no KV pair in $\mathcal{HG}$ going to be replaced, which alleviates the low utility caused by a large domain.


\setlength{\textfloatsep}{0.1cm}
\begin{algorithm}[!h]
  \caption{DSR (\textsc{Randomize})}
  \label{alg:dsr-randomize}
  {\small{
  \begin{algorithmic}[1]
        \Require {timestamp $t$, privacy budget $\epsilon$, data domain $\Omega$ with size $d$, data structure $\mathcal{HG}$.}
        \Ensure {$r_i^t$}
        \State {Obtain the current raw data $v_i^t$;}
        \If {the least count $\mathcal{HG}[k].C\le 1$}
        \State {$r_i^t\leftarrow$ GRR($v_i^t$, $\epsilon$)} \Comment{Randomize data in $\Omega$ with GRR.}
        \Else
        \State {Let $b\leftarrow Ber(\frac{e^{\epsilon}}{e^{\epsilon}+k})$}
        \If {$b==1$} \Comment{Randomize data in reduced domain.}
        \State {$r_i^t=v_i^t$}
        \Else
        \State {$r_i^t=v'$, where $v'\in\{\mathcal{HG}.ID\}\cup\{\bot\}$ and $v'\neq v_i^t$}
        \EndIf
        \EndIf
        \\
        \Return {$r_i^t$}
  \end{algorithmic}}}
\end{algorithm}


\medskip
\noindent\textbf{Algorithms.}
The \textsc{Randomize} Algorithm of DSR is presented in Algorithm \ref{alg:dsr-randomize}. In the general case, the user randomizes data on the shrinking domain ${\mathcal{HG}.C}\cup{\bot}$. If the server receives a " $\bot$ ", it reduces the count of the weakest KV pair by $1$ with a certain probability. However, this approach poses a challenge when the count of the weakest KV pair reduces to $0$, as it becomes uncertain which cold item should replace the weakest KV pair. Furthermore, requiring the user to re-randomize the data across the entire domain can potentially violate $\epsilon$-Local Differential Privacy ($\epsilon$-LDP). To solve this problem, DSR requires users to switch to BGR for randomization on the entire data domain when the count of the weakest KV pair reaches $1$ or less. In this case, as long as the count of the weakest KV pair is reduced by $1$, it can be replaced by a new KV pair with a cold item directly. Users can subsequently switch back to randomizing data on the reduced domain once the new KV pair stabilizes (i.e., reaches a count $>1$).

\setlength{\floatsep}{0.1cm}
\setlength{\textfloatsep}{0.1cm}
\begin{algorithm}[t]
  \caption{DSR ($\textsc{Insert}$)}
  \label{alg:dsr-insert}
  {\small{
  \begin{algorithmic}[1]
        \Require {timestamp $t$, privacy budget $\epsilon$, data domain $\Omega$ with size $d$, reduced domain $\Omega_s=\{\mathcal{HG}.ID\}\cup\{\bot\}$, data structure $\mathcal{HG}$, number of the received data $\in\Omega$ $num_{entire}$, number of the received data $\in\Omega_s$ $num_{reduced}$.}
        \Ensure {Updated $\mathcal{HG}$}
        \State {$p_1= \frac{e^{\epsilon}}{e^{\epsilon} + d - 1}$, $q_1= \frac{1}{e^{\epsilon} + d - 1}$, $p_2= \frac{e^{\epsilon}}{e^{\epsilon} + k}$, $q_2= \frac{1}{e^{\epsilon} + k}$}
        \State {Receive an incoming data $r_i^t$;}
        \State {$\mathcal{HG}\leftarrow\textsf{DSR\_Insert}$($r_i^t$, $\mathcal{HG}$, $p_1$, $q_1$, $p_2$, $q_2$, $num_{entire}$, $num_{reduced}$)}\\
        \Return {Updated $\mathcal{HG}$}
  \end{algorithmic}}}
\end{algorithm}


\setlength{\floatsep}{0.1cm}
\begin{algorithm}[t]
  \caption{DSR ($\textsc{Response}$)}
  \label{alg:dsr-response}
  {\small{
  \begin{algorithmic}[1]
        \Require {timestamp $t$, privacy budget $\epsilon$, data domain $\Omega$ with size $d$, data structure $\mathcal{HG}$, number of the received data $num$.}
        \Ensure {$ResponseList$}
        \State {$p_1= \frac{e^{\epsilon}}{e^{\epsilon} + d - 1}$, $q_1= \frac{1}{e^{\epsilon} + d - 1}$, $p_2= \frac{e^{\epsilon}}{e^{\epsilon} + k}$, $q_2= \frac{1}{e^{\epsilon} + k}$}
        \If {receive a Top-$k$ query}
        \State {$\mathcal{HG}\leftarrow$ $\textsf{DSR\_FinalDebias}(\mathcal{HG},\ p_1,\ q_1,\ p_2,\ q_2)$}
        \For {each $\mathcal{HG}[j]\in \mathcal{HG}$}
        \State{$ResponseList[j].ID\leftarrow\mathcal{HG}[j].ID$}
        \State {$ResponseList[j].C\leftarrow\mathcal{HG}[j].C$}
        \EndFor
        \EndIf\\
        \Return {$ResponseList$}
  \end{algorithmic}}}
\end{algorithm}


The \textsc{Insert} and \textsc{Response} algorithms are shown in Algorithms \ref{alg:dsr-insert} and \ref{alg:dsr-response}, respectively. 
Due to the switch between two mechanisms with different parameters in the \textsc{Randomize} algorithm, a complex debiasing process is initiated during the insertion and response phases.
Each switch between mechanisms necessitates debiasing of all the counts of KV pairs stored in $\mathcal{HG}$ using the debiasing formula of the current mechanism.
To prevent redundant debiasing of cumulative counts, it is imperative to multiply all the counts by the denominator of the debiasing formula of the new mechanism.
For the sake of readability, the debiasing functions $\textsf{DSR\_Insert}$ in the \textsc{Insert} algorithm and $\textsf{DSR\_FinalDebias}$ in the \textsc{Response} algorithm are deferred to Appendix \ref{app:func_dsr_insert}.

\vspace{0.05in}

\noindent\textbf{Theoretical Analysis.}
Theoretical analysis demonstrates that the error bound of DSR in the worst-case scenario aligns with that of BGR, as illustrated in Theorem \ref{the:LdpPD}. This can be attributed to the frequent replacement of the weakest KV pair for certain data distributions, compelling users to randomize data over the entire data domain for the majority of instances. 
However, DSR's improvement over BGR is expected to be more substantial for datasets exhibiting a more concentrated data distribution.

\subsection{BDR (Budget-Division Randomization)}
\label{sec:advldppd}

We present a novel scheme, BDR, that further enhances accuracy beyond DSR. Although DSR demonstrates improvement over BGR, it still predominantly randomizes data similarly to BGR when there are frequent changes to the items in $\mathcal{HG}$. Additionally, the complexity of debiasing is increased due to the transition between two randomization mechanisms with distinct parameters. 
Since the current cold value cannot be randomized and sent repeatedly, resulting in the waste of the privacy budget while awaiting new cold items.
To address this problem, we designed a budget-division-based scheme (BDR) that efficiently avoids switching between different randomization mechanisms and mixing randomized data from different output data domains.
Besides, we observe that the hot items stored by $\mathcal{HG}$ after initialization may not be true hot items.
Through adjustments in the allocation of the privacy budget, BDR reduces the impact of the initial $\mathcal{HG}$ on the final result, with the probability of $\frac{k}{e^\epsilon+k}$ that any other item be randomized to the current "hot items".

\begin{algorithm}[t]
  \caption{BDR ($\textsc{Randomize}$)}
  \label{alg:advLdpPD-randomize}
  {\small{
  \begin{algorithmic}[1]
        \Require {timestamp $t$, privacy budget $\epsilon$, data domain $\Omega$ with size $d$, data structure $\mathcal{HG}$.}
        \Ensure {$r_i^t$}
        \State {Divide $\epsilon$ into $\epsilon_1$ and $\epsilon_2$, where $\epsilon_1+\epsilon_2=\epsilon$;}
        \State {Obtain the current raw data $v_i^t$;}
        \State {$Flag\leftarrow \mathcal{M}_{judge}(v_i^t, \epsilon_1)$} \Comment {Determine whether $v_i^t$ is hot or cold}
        \If {Flag==1} \Comment {$v_i^t$ is determined as hot}
        \State {$r_i^t\leftarrow\mathcal{M}_{hot}(v_i^t, \epsilon_2)$}
        \ElsIf {$Flag==0$ and $\mathcal{HG}[k].C\le 1$} \Comment {$v_i^t$ is determined as cold and an item in $\mathcal{HG}$ is about to be evicted}
        \State {$r_i^t\leftarrow\mathcal{M}_{cold}(v_i^t, \epsilon_2)$}
        \Else
        \State {$r_i^t=\bot$}
        \EndIf\\
        \Return {$r_i^t$}
  \end{algorithmic}}}
\end{algorithm}


\begin{algorithm}[t]
  \caption{BDR ($\textsc{Randomize}$-$\mathcal{M}_{judge}$)}
  \label{alg:advLdpPD-randomize1}
  {\small{
  \begin{algorithmic}[1]
        \Require {raw data $v_i^t$, privacy budget $\epsilon_1$, data structure $\mathcal{HG}$}
        \Ensure {$Flag$}
        \State {Let $b\leftarrow Ber(\frac{e^{\epsilon_1}}{e^{\epsilon_1}+1})$}
        \If {$b==1$}
        \If {$v_i^t\in \mathcal{HG}$}
        \State{$Flag=1$}
        \Else
        \State {$Flag=0$}
        \EndIf
        \Else
        \If {$v_i^t\in \mathcal{HG}$}
        \State{$Flag=0$}
        \Else
        \State {$Flag=1$}
        \EndIf
        \EndIf\\
        \Return {$Flag$}
  \end{algorithmic}}}
\end{algorithm}


We divide the privacy budget into two parts and run three sub-randomization mechanisms $\mathcal{M}_{judge}$, $\mathcal{M}_{hot}$, and $\mathcal{M}_{cold}$.
Specifically, the $\mathcal{M}_{judge}$ mechanism is used to randomize whether the data is a hot item.
If the $\mathcal{M}_{judge}$ mechanism determines that the data is a hot item, the $\mathcal{M}_{hot}$ mechanism is used to randomize the data in the data domain covered by items recorded in $\mathcal{HG}$.  
The $\mathcal{M}_{cold}$ mechanism randomizes the items determined to be cold by the $\mathcal{M}_{judge}$ mechanism when an item in $\mathcal{HG}$ is about to be evicted.
We show the overall flow of the \textsc{Randomize} algorithm in Algorithm \ref{alg:advLdpPD-randomize}, and the $\mathcal{M}_{judge}$, $\mathcal{M}_{hot}$, and $\mathcal{M}_{cold}$ mechanisms in Algorithm \ref{alg:advLdpPD-randomize1}, Algorithm \ref{alg:advLdpPD-randomize2}, and Algorithm \ref{alg:advLdpPD-randomize3}, respectively.

\begin{algorithm}[t]
  \caption{BDR ($\textsc{Randomize}$-$\mathcal{M}_{hot}$)}
  \label{alg:advLdpPD-randomize2}
  {\small{
  \begin{algorithmic}[1]
        \Require {raw data $v_i^t$, $Flag$, privacy budget $\epsilon_2$, data structure $\mathcal{HG}$.}
        \Ensure {$r_i^t$}
        \State {Let $b\leftarrow Ber(\frac{e^{\epsilon_2}}{e^{\epsilon_2}+k-1})$}
        \If {$v_i^t\in\mathcal{HG}$}
        \If {$b==1$}
        \State {$r_i^t=v_i^t$}
        \Else
        \State {$r_i^t=v'$, where $v'\in \mathcal{HG}$ and $v'\neq v_i^t$}
        \EndIf
        \Else
        \State {$r_i^t=v'$, where $v'$ is uniform random sampled from $\mathcal{HG}$}
        \EndIf\\
        \Return {$r_i^t$}
  \end{algorithmic}}}
\end{algorithm}

\begin{algorithm}[t]
  \caption{BDR ($\textsc{Randomize}$-$\mathcal{M}_{cold}$)}
  \label{alg:advLdpPD-randomize3}
  {\small{
  \begin{algorithmic}[1]
        \Require {raw data $v_i^t$, privacy budget $\epsilon_2$, data domain $\Omega$ with size $d$, data structure $\mathcal{HG}$.}
        \Ensure {$r_i^t$}
        \State {Let $b\leftarrow Ber(\frac{e^{\epsilon_2}}{e^{\epsilon_2}+d-k-1})$}
        \If {$v_i^t\notin\mathcal{HG}$}
        \If {$b==1$}
        \State {$r_i^t=v_i^t$}
        \Else
        \State {$r_i^t=v'$, where $v'\in \Omega/\mathcal{HG}$ and $v'\neq v_i^t$}
        \EndIf
        \Else
        \State {$r_i^t=v'$, where $v'$ is uniform random sampled from $\Omega/\mathcal{HG}$}
        \EndIf\\
        \Return {$r_i^t$}
  \end{algorithmic}}}
\end{algorithm}

\medskip
\noindent\textbf{Algorithms.}
At timestamp $t$, the user obtains the current raw data $v_i^t$, which is a hot item or a cold item.
Note that the server can write the currently recorded hot items to a bulletin board in real time or the users can obtain the set of hot items from the \emph{response module} at any time.
Therefore, users can always know the current hot items and cold items when randomizing their data.
Firstly, the user randomizes whether $v_i^t$ is a hot item using the $\mathcal{M}_{judge}$ mechanism, which is a binary flip.
The error introduced by the $\mathcal{M}_{judge}$ mechanism is independent of the size of the data domain.
If the $\mathcal{M}_{judge}$ mechanism determines that $v_i^t$ is a hot item, then $v_i^t$ needs to be randomized on the data domain covered by items recorded in $\mathcal{HG}$ with the $\mathcal{M}_{hot}$ mechanism.
If $v_i^t$ is a hot item, $\mathcal{M}_{hot}$ mechanism randomizes it in the data domain covered by items recorded in $\mathcal{HG}$ as the general GRR.
If $v_i^t$ is actually a cold item, $\mathcal{M}_{hot}$ mechanism uniformly and randomly maps it to any item contained in $\mathcal{HG}$.
Otherwise, the user sends ``$\bot$'' to the server if the $\mathcal{M}_{judge}$ mechanism determines that $v_i^t$ is a cold item.

We consider a special case where the $\mathcal{M}_{judge}$ mechanism determines that $v_i^t$ is a cold item, but the count of the weakest KV pair in $\mathcal{HG}$ is reduced to $0$ by the \emph{ED strategy}.
Then the server would need this cold value to replace the item in $\mathcal{HG}$. 
Therefore, we provide the $\mathcal{M}_{cold}$ mechanism, similar to the $\mathcal{M}_{hot}$ mechanism, randomizing the data in the data domain covered by the cold items.
When the user observes that the count of the weakest KV pair in $\mathcal{HG}$ is equal to or smaller than $1$, the user uses $\mathcal{M}_{cold}$ mechanism to randomize $v_i^t$ and then sends it to the server when $v_i^t$ is determined to be cold.
The $\mathcal{HG}$ has a high probability of replacing the weakest item with a cold item in this case. 
Note that the privacy budget consumed by $\mathcal{M}_{cold}$ is the remaining budget $\epsilon_2$ at timestamp $t$, and the total privacy budget for $v_i^t$ is still limited to $\epsilon$. 
Figure \ref{fig:advLdpPD-examp} shows an example at $6$ timestamps to illustrate the randomization process.

Next, we discuss how the \emph{response module} on the server debiases the counts of hot items stored in $\mathcal{HG}$.
Denote $p_1$ as the probability $\frac{e^{\epsilon_1}}{e^{\epsilon_1}+1}$, $q_1$ as the probability $\frac{1}{e^{\epsilon_1}+1}$, $p_2$ as the probability $\frac{e^{\epsilon_2}}{e^{\epsilon_2}+k-1}$, $q_2$ as the probability $\frac{1}{e^{\epsilon_2}+k-1}$.
Let $num$ denote the total number of data received by the server from the beginning of the statistics to the current timestamp, and $\gamma_h$ denote the proportion of hot items.
Let $\bar f_v$ be the noisy recorded count of item $v$, then the debiased estimation result $\tilde f_v$ is calculated as
\begin{equation}
  \label{eq:adv_debias}
  \begin{aligned}
    &\tilde f_v=\frac{\bar f_v-\gamma_h\cdot num(p_1q_2-q_1/k)-num\cdot q_1/k}{p_1(p_2-q_2)}
  \end{aligned}
\end{equation}
Here, $\gamma_h$ can be obtained from the warm-up round or the prior knowledge of data distribution, which is discussed in detail in Section \ref{subsec:discuss}.
We show the details of the \textsc{Response} algorithm in Algorithm \ref{alg:advLdpPD-response}.
Besides, we omit the details of the \textsc{Insert} algorithm here since it is the same as that of BGR shown in Algorithm \ref{alg:LdpPD}.

\begin{algorithm}[t]
  \caption{BDR ($\textsc{Response}$)}
  \label{alg:advLdpPD-response}
  {\small{
  \begin{algorithmic}[1]
        \Require {timestamp $t$, privacy budget $\epsilon_1$, $\epsilon_2$, data domain $\Omega$ with size $d$, data structure $\mathcal{HG}$, number of the received data $num$.}
        \Ensure {$ResponseList$}
        \State {$p_1=\frac{e^{\epsilon_1}}{e^{\epsilon_1}+1}$, $p_2= \frac{e^{\epsilon_2}}{e^{\epsilon_2} + k - 1}$, $q_1=\frac{1}{e^{\epsilon_1}+1}$, $q_2= \frac{1}{e^{\epsilon_2} + k - 1}$}
        \If {receive a Top-$k$ query}
        \For {each $\mathcal{HG}[j]\in \mathcal{HG}$}
        \State{$ResponseList[j].ID\leftarrow\mathcal{HG}[j].ID$}
        \State {$ResponseList[j].C\leftarrow \mathcal{HG}[j].C-num\cdot(\gamma_h p_1q_2+(1-\gamma_h)q_1/k)/(p_1(p_2-q_2))$}
        \EndFor
        \EndIf\\
        \Return {$ResponseList$}
  \end{algorithmic}}}
\end{algorithm}

\begin{figure}[t]
	\centering
	\includegraphics[width=0.4\textwidth]{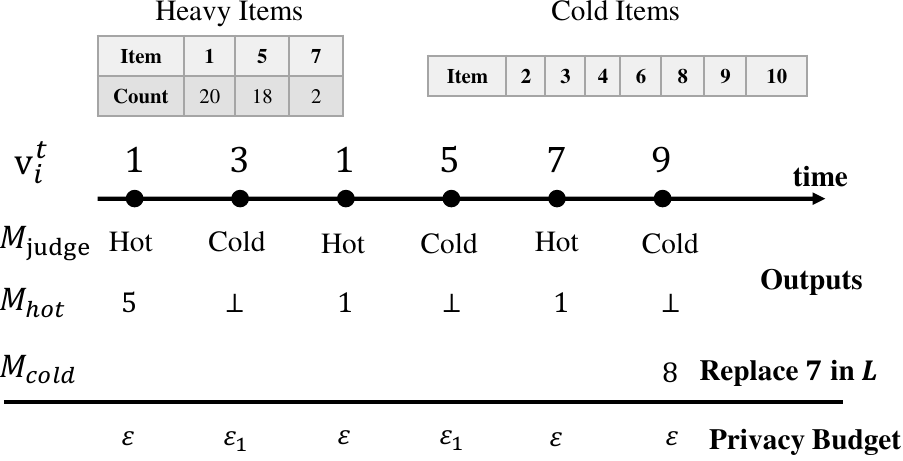}
	\vspace{-0.05in}
	\caption{An example of BDR.}
	\label{fig:advLdpPD-examp}
\end{figure}

\medskip
\noindent\textbf{Theoretical Analysis.} We show that BDR satisfies $\epsilon$-LDP as below.
\begin{theorem}
  \label{the:advLdpPD-priv}
  BDR satisfies $\epsilon$-LDP.
\end{theorem}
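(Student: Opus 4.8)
The plan is to verify the $\epsilon$-LDP guarantee by tracking the privacy budget consumed along every execution path of the \textsc{Randomize} algorithm (Algorithm \ref{alg:advLdpPD-randomize}) and composing. The key observation is that the randomization proceeds in two stages: first $\mathcal{M}_{judge}$ is invoked with budget $\epsilon_1$, and then \emph{at most one} of $\mathcal{M}_{hot}$ or $\mathcal{M}_{cold}$ is invoked with budget $\epsilon_2$ (or nothing further happens, when the output is the deterministic symbol $\bot$). Since $\epsilon_1+\epsilon_2=\epsilon$, sequential composition of LDP mechanisms will give the bound, provided each sub-mechanism is shown to be LDP with its stated budget and provided the branching on $\mathcal{HG}[k].C$ does not leak extra information (it depends only on server-side state, not on $v_i^t$, so it is fine).

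First I would show $\mathcal{M}_{judge}$ satisfies $\epsilon_1$-LDP: it outputs a single bit that equals $\mathbbm{1}[v_i^t\in\mathcal{HG}]$ with probability $\tfrac{e^{\epsilon_1}}{e^{\epsilon_1}+1}$ and its negation otherwise, so for any two inputs $v,v'$ and any output flag the likelihood ratio is either $1$ or $\tfrac{e^{\epsilon_1}}{e^{\epsilon_1}+1}\big/\tfrac{1}{e^{\epsilon_1}+1}=e^{\epsilon_1}$. Next I would show $\mathcal{M}_{hot}$ satisfies $\epsilon_2$-LDP: for inputs in $\mathcal{HG}$ it is exactly GRR on the domain $\{\mathcal{HG}.ID\}$ of size $k$ with $p_2=\tfrac{e^{\epsilon_2}}{e^{\epsilon_2}+k-1}$, $q_2=\tfrac{1}{e^{\epsilon_2}+k-1}$, so $p_2/q_2=e^{\epsilon_2}$; for inputs outside $\mathcal{HG}$ the output is uniform on $\mathcal{HG}$, i.e. each outcome has probability $1/k$, and one checks $q_2\le 1/k\le p_2$, so the ratio against any in-$\mathcal{HG}$ input is still within $[e^{-\epsilon_2},e^{\epsilon_2}]$. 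The argument for $\mathcal{M}_{cold}$ is identical with $k$ replaced by $d-k$ and the complementary domain $\Omega\setminus\mathcal{HG}$. The deterministic output $\bot$ contributes a ratio of $1$.

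Then I would compose. Fix $v,v'\in\Omega$ and an output $r$. Condition on the value of $Flag$ returned by $\mathcal{M}_{judge}$; the distribution of $Flag$ changes by at most a factor $e^{\epsilon_1}$ between $v$ and $v'$. Given $Flag$ (and given the server-side predicate $\mathcal{HG}[k].C\le 1$, which is independent of the input), the downstream mechanism — whichever of $\mathcal{M}_{hot}$, $\mathcal{M}_{cold}$, or the constant-$\bot$ branch is selected — produces $r$ with probabilities differing by at most $e^{\epsilon_2}$. Multiplying, $\Pr[\textsc{Randomize}(v)=r]\le e^{\epsilon_1}e^{\epsilon_2}\Pr[\textsc{Randomize}(v')=r]=e^{\epsilon}\Pr[\textsc{Randomize}(v')=r]$, which is the claim.

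The main subtlety — and the step I would be most careful about — is the composition step combined with the special $\mathcal{M}_{cold}$ branch: $\mathcal{M}_{cold}$ is only reached for \emph{some} inputs (those judged cold) and only under a server-side condition, so one must argue that an adversary seeing output $r$ cannot exploit the \emph{which branch fired} information. This is handled by noting that the branch selection is a deterministic function of $Flag$ and of server state, hence already accounted for once we condition on $Flag$; there is no additional budget cost because $\mathcal{M}_{hot}$ and $\mathcal{M}_{cold}$ are never both run on the same record. A secondary point to state carefully is that the $\mathcal{M}_{cold}$ invocation at timestamp $t$ spends the \emph{remaining} budget $\epsilon_2$ at that timestamp — the event-level accounting is per-record, so the total spent on $v_i^t$ is $\epsilon_1+\epsilon_2=\epsilon$ regardless of how previous records were handled.
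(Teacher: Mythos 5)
Your proposal is correct and follows essentially the same route as the paper's own proof: establish that $\mathcal{M}_{judge}$ is $\epsilon_1$-LDP and that each of $\mathcal{M}_{hot}$, $\mathcal{M}_{cold}$ is $\epsilon_2$-LDP (via the same ratio checks $p_2/q_2=e^{\epsilon_2}$ and $q_2\le 1/k\le p_2$), then conclude by sequential composition since $\epsilon_1+\epsilon_2=\epsilon$. Your explicit conditioning on $Flag$ and the observation that branch selection depends only on server-side state is a welcome elaboration of the composition step that the paper handles by citing the composition theorem, but it is not a different argument.
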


\vspace{-0.1in}

\begin{proof}
\revision{Firstly, $\mathcal{M}_{judge}$ satisfies $\epsilon$-LDP since $p_1/q_1=e^{\epsilon_1}$. Secondly, $M_{hot}$ satisfies $\epsilon_2$-LDP since $p_2k\le p_2/q_2=e^{\epsilon_2}$. Similarly, $M_{cold}$ also satisfies $\epsilon_2$-LDP. Therefore, BDR satisfies $(\epsilon_1+\epsilon_2)$-LDP.
The detailed proof is deferred to Appendix \ref{sec:Adv_privacy}.}
\end{proof}

\vspace{-0.05in}

Then we show the error bound of BDR in Theorem \ref{the:AdvLdpPD}.

\vspace{-0.05in}

\begin{theorem}
  \label{the:AdvLdpPD}
  Given a stream prefix $\hat S_t$ with $t$ items randomized by BDR satisfying $\epsilon$-LDP and there is a data structure $\mathcal{HG}$ to store the Top-$k$ items.
  Let $v_i$ be the $i^{\text{th}}$ hottest item, $f_i$ be the real frequency of $v_i$, $\tilde f_i$ be the final estimated frequency of $v_i$.
  We have
  \begin{align*}
    Pr[f_i-\tilde{f_i}&\le(3\sqrt{\frac{t\log(3/\beta)}{2}}+\alpha t)\cdot\frac{(e^{\epsilon_1}+1)(e^{\epsilon_2}+k-1)}{e^{\epsilon_1}(e^{\epsilon_2}-1)}]
    \end{align*}
    \begin{align*}
    &\ge(1-\beta)(1-\frac{1}{2\alpha}(1-\sqrt{1-\frac{4P_{weak}E(V)}{b-1}}))
  \end{align*}
  where $P_{weak}=\frac{(i-1)!(d-k)!}{(d-1)!(i-k)!}$, $E(V)=\sum_{j=i+1}^d f_j$, $\alpha$ and $\beta$ are small positive numbers with $\alpha,\beta\in(0,1)$.
\end{theorem}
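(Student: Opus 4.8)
The plan is to mirror the structure of the proof of Theorem~\ref{the:LdpPD} (the BGR bound), decomposing the total error $f_i - \tilde f_i$ into an \emph{ED-strategy} part and a \emph{randomization} part, but with the randomization part now governed by the two-stage budget split of BDR rather than a single GRR call. Concretely, let $\hat f_i$ be the frequency of noisy data that is actually recorded in $\mathcal{HG}$ under the ED strategy, let $\bar f_i$ be the count of $v_i$ just before the final GRR-style debiasing, and let $\tilde f_i$ be the debiased estimate via Eq.~\eqref{eq:adv_debias}. The first step is to reuse Lemma~\ref{lem:pd_error} with $w=1$ to control the ED-induced loss: exactly as in Theorem~\ref{the:LdpPD}, this yields the factor $(1 - \tfrac{1}{2\alpha}(1 - \sqrt{1 - \tfrac{4P_{weak}E(V)}{b-1}}))$ with $P_{weak} = \tfrac{(i-1)!(d-k)!}{(d-1)!(i-k)!}$ and $E(V) = \sum_{j=i+1}^d f_j$. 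I would argue that plugging BDR's randomization in place of GRR does not change this combinatorial factor, since it depends only on the arrival pattern of \emph{recorded} items, not on how they were perturbed.

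The second step is the new content: bounding the randomization error for BDR. Here I would treat the composition $\mathcal{M}_{judge}$ then $\mathcal{M}_{hot}$ (the only path that deposits a hot item into $\mathcal{HG}$) as a single randomizer with ``keep'' probability $p_1 p_2$ and appropriate ``flip'' probabilities, and verify that the debiasing formula~\eqref{eq:adv_debias} is unbiased, i.e.\ $\mathbb{E}[\tilde f_v] = f_v$ up to the ED loss, with effective scaling denominator $p_1(p_2 - q_2)$. Writing out the per-user contribution to $\hat f_v$ as a sum of independent bounded (Bernoulli-type) indicators, I would apply a Hoeffding/Azuma bound to $\hat f_i - \mathbb{E}[\hat f_i]$. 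The factor $3\sqrt{t\log(3/\beta)/2}$ (versus $\sqrt{2t\log(2/\beta)}$ in Theorem~\ref{the:LdpPD}) should emerge from a union bound over \emph{three} sources of stochasticity in the BDR pipeline — the $\mathcal{M}_{judge}$ coin, the $\mathcal{M}_{hot}$ coin, and the uniform mapping of a true-cold item into $\mathcal{HG}$ — each contributing a $\sqrt{t\log(3/\beta)/2}$ deviation term, summed via triangle inequality, hence the constant $3$ and the $3/\beta$ inside the log. The scaling factor $\tfrac{(e^{\epsilon_1}+1)(e^{\epsilon_2}+k-1)}{e^{\epsilon_1}(e^{\epsilon_2}-1)}$ is exactly $\tfrac{1}{p_1(p_2-q_2)}$ after substituting $p_1 = \tfrac{e^{\epsilon_1}}{e^{\epsilon_1}+1}$, $p_2 = \tfrac{e^{\epsilon_2}}{e^{\epsilon_2}+k-1}$, $q_2 = \tfrac{1}{e^{\epsilon_2}+k-1}$, so I would confirm that identity and note that it replaces BGR's $\tfrac{e^\epsilon + d - 1}{e^\epsilon - 1}$ — crucially, the domain size $d$ has dropped out of the randomization scaling, which is the whole point of BDR.

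The final step is to combine the two pieces. The ED event (controlled by Lemma~\ref{lem:pd_error}) and the concentration event (controlled by Hoeffding) are handled the same way as in Theorem~\ref{the:LdpPD}: condition on $\hat f_i$ staying close to its mean, and multiply the two success probabilities, giving the product form $(1-\beta)(1 - \tfrac{1}{2\alpha}(1 - \sqrt{1 - \tfrac{4P_{weak}E(V)}{b-1}}))$ on the right-hand side. I would then chain the inequalities $f_i - \tilde f_i \le \big((\text{ED loss}) + (\text{concentration deviation})\big)\cdot \tfrac{1}{p_1(p_2-q_2)}$ to land the stated bound, deferring the routine algebra to an appendix as the authors did for Theorem~\ref{the:LdpPD}.

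The main obstacle I anticipate is the bias analysis of the three-stage mechanism: unlike plain GRR, a true \emph{cold} item can still be mapped uniformly into $\mathcal{HG}$ by $\mathcal{M}_{hot}$ (when $\mathcal{M}_{judge}$ misclassifies it as hot with probability $q_1$), and the $\bot$ symbol absorbs most cold mass, so the expectation $\mathbb{E}[\hat f_v]$ picks up a cross term proportional to $\gamma_h$ (the hot-item fraction) that Eq.~\eqref{eq:adv_debias} is specifically designed to cancel. Verifying that cancellation carefully — and checking that the residual (from estimating $\gamma_h$ rather than knowing it exactly) is lower-order, or else is implicitly assumed exact in this theorem — is the delicate part; everything downstream is a standard concentration-plus-union-bound argument parallel to the BGR proof.
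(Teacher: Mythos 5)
Your proposal follows essentially the same route as the paper's proof in Appendix \ref{sec:Adv_the}: the ED loss is controlled by Lemma \ref{lem:pd_error} exactly as in Theorem \ref{the:LdpPD}, and the randomization error is handled by decomposing $\hat f_i$ into three independent binomial contributions, $Bin(f_i, p_1p_2) + Bin(N_h - f_i, p_1q_2) + Bin(t - N_h, q_1/k)$, applying Hoeffding to each and taking a union (Bonferroni) bound --- which is precisely where the constant $3$ and the $\log(3/\beta)$ arise --- before multiplying the two success probabilities and dividing by $p_1(p_2-q_2)$. The only cosmetic difference is that the paper partitions the $t$ inputs by category (true occurrences of $v_i$, other hot items, cold items), each with a fixed composite acceptance probability, rather than by randomization stage as you describe, which makes the independence needed for Hoeffding immediate without any Azuma-style conditioning; and, as you suspected, the theorem implicitly treats $N_h$ (equivalently $\gamma_h$) as exactly known.
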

\begin{proof}
\revision{The approach of the proof is similar to that of Theorem \ref{the:LdpPD}, the error bound of final debiased frequency $\tilde f_i$ compared to $f_i$ can be obtained by combining the error bounds of $\hat f_i - \bar f_i$ and $f_i-\frac{\hat f_i-N_hp_1q_2-(t-N_h)\cdot\frac{q_1}{k}}{p_1(p_2-q_2)}$, where $N_h$ is the number of hot items and $N_h\le t$.
The detailed proof is deferred to Appendix \ref{sec:Adv_the}.}
\end{proof}
The result of Theorem \ref{the:AdvLdpPD} shows that BDR significantly reduces the impact of the large data domain on the accuracy of the statistical results compared to BGR and DSR (Theorem \ref{the:LdpPD}).

\subsection{CNR (Cold-Nomination Randomization)}
\label{sec:cnr}
In BDR, we find that the privacy budget $\epsilon_2$ is unexploited when the data is determined to be a cold item and there is no item in $\mathcal{HG}$ that is about to be evicted, which can be observed in Figure \ref{fig:advLdpPD-examp}.
Besides, there is a light part in the original data structure of $\mathcal{HG}$ used to store the counts of cold items (see Figure \ref{fig:heavyguardsys}).
The length of this part $\lambda_l$ is set to $0$ in BGR, DSR, and BDR. 
Driven by these observations, we propose a new scheme CNR, which uses these two idle resources to further improve the accuracy over BDR.

\begin{algorithm}[t]
  \caption{CNR ($\textsc{Randomize}$)}
  \label{alg:cnr-randomize}
  {\small{
  \begin{algorithmic}[1]
        \Require {timestamp $t$, privacy budget $\epsilon$, data domain $\Omega$ with size $d$, data structure $\mathcal{HG}$.}
        \Ensure {$r_i^t$}
        \State {Divide $\epsilon$ into $\epsilon_1$ and $\epsilon_2$, where $\epsilon_1+\epsilon_2=\epsilon$;}
        \State {Obtain the current raw data $v_i^t$;}
        \State {$Flag\leftarrow \mathcal{M}_{judge}(v_i^t, \epsilon_1)$} \Comment {Determine whether $v_i^t$ is hot or cold}
        \If {Flag==1} \Comment {$v_i^t$ is determined as hot}
        \State {$r_i^t\leftarrow\mathcal{M}_{hot}(v_i^t, \epsilon_2)$}
        \Else \Comment {$v_i^t$ is determined as cold}
        \State {$r_i^t\leftarrow\mathcal{M}_{cold}(v_i^t, \epsilon_2)$}
        \EndIf\\
        \Return {$r_i^t$}
  \end{algorithmic}}}
\end{algorithm}

\setlength{\floatsep}{0.1cm}
\setlength{\textfloatsep}{0.1cm}
\begin{algorithm}[t]
  \caption{CNR ($\textsc{Insert}$)}
  \label{alg:cnr-insert}
  {\small{
  \begin{algorithmic}[1]
        \Require {timestamp $t$, data domain $\Omega$ with size $d$, data structure $\mathcal{HG}$, number of the received data $num$.}
        \Ensure {Updated $\mathcal{HG}$}
        \State {Receive an incoming data $r_i^t$;}
        \State {$num\leftarrow num+1$}
        \State {Insert $r_i^t$ into Heavy part of $\mathcal{\mathcal{HG}}$ following ED strategy;}
        \If {$r_i^t$ not in Heavy part of $\mathcal{\mathcal{HG}}$}
        \State {Insert $r_i^t$ into Light part of $\mathcal{\mathcal{HG}}$ following ED strategy;}
        \EndIf
        \If {the least count in Heavy part $\mathcal{HG}[k].C\le 0$}
        \State {Replace the weakest KV pair in Heavy part with the king KV pair in Light part, where their counts are set to $1$.}
        \EndIf\\
        \Return {Updated $\mathcal{HG}$}
  \end{algorithmic}}}
\end{algorithm}

\medskip
\noindent\textbf{Algorithms.}
Algorithm \ref{alg:cnr-randomize} shows the \textsc{Randomize} algorithm of CNR, similar to that of BDR. All the data determined as cold items by $\mathcal{M}_{judge}$ mechanism are randomized to specific cold items on the cold domain using $\mathcal{M}_{cold}$ mechanism, rather than calling $\mathcal{M}_{cold}$ mechanism only when there is a hot item to be evicted. Here, $\mathcal{M}_{judge}$, $\mathcal{M}_{hot}$, and $\mathcal{M}_{cold}$ are the same as Algorithms \ref{alg:advLdpPD-randomize1}, \ref{alg:advLdpPD-randomize2}, and \ref{alg:advLdpPD-randomize3} in BDR. When inserting the randomized items into $\mathcal{\mathcal{HG}}$, the cold items that cannot be inserted into the heavy part are inserted into the light part following the ED strategy.
Then the light part helps to provide a more accurate potential hot item to become a new hot item when a value in the heavy part is about to be evicted.
Note that the light part only provides selected cold items, and its count is set to $1$ when a cold item enters the heavy part, just the same as BDR.
Thus, the debiasing formula of the counts in the heavy part is the same as that of the BDR, avoiding debiasing the randomized counts from different output domains like DSR. We show the \textsc{Insert} in Algorithm \ref{alg:cnr-insert}, and the \textsc{Response} is the same as Algorithm \ref{alg:advLdpPD-response}.

\medskip
\noindent\textbf{Theoretical Analysis.}
Firstly, CNR still satisfies $\epsilon$-LDP, and the privacy budget consumed by randomizing data is $\epsilon_1+\epsilon_2=\epsilon$.
Then, the error bound of counts recorded in the heavy part is the same as Theorem \ref{the:AdvLdpPD} shown in BDR, since CNR only provides a better cold item to become a new hot item when there is an item to be evicted.
Note that all theoretical analyses for the error bound of the counts we provide only consider the error of the recorded counts without considering whether the items are true hot items.
Since the accuracy of the hot items tracked by the scheme is influenced by both initial $\mathcal{\mathcal{HG}}$ and data distributions, we evaluate it by conducting a comprehensive evaluation in Section \ref{sec:exp}.

Besides, CNR has no specific requirement for the length of the light part $\lambda_l$, as long as it satisfies $\lambda_l>0$.
The longer light part can provide more accurate new hot items to the heavy part.
The setting of $\lambda_h$ can refer to the original $\mathcal{\mathcal{HG}}$ \cite{DBLP:conf/kdd/0003GZZSL18}, or set a small constant according to the specific requirements.
In our experiments, setting $\lambda_l=5$ for finding Top-$20$ items on a concentrated data distribution can observe a significant improvement for small $\epsilon$.
Furthermore, the counters in the light part of $\mathcal{\mathcal{HG}}$ are tailored for cold items, and the counter size is very small, e.g., 4 bits.
Therefore, CNR does not increase too much additional memory consumption compared to the other schemes and still meets high memory efficiency.

\section{Experimental Evaluation}
\label{sec:exp}
In this section, we design experiments to evaluate our proposed schemes. The evaluation mainly includes four aspects:
(1) the accuracy of the heavy hitters via the proposed schemes; (2) the accuracy achieved by the proposed schemes compared with the baselines;
(3) the impact of the key parameters on the accuracy of the proposed schemes;
(4) the memory size consumed by the proposed schemes compared with the baselines.
Towards these goals, we conduct experiments on both synthetic and real-world datasets, and simulate to collect streaming data from users at continuous timestamps for heavy hitter analysis.
Besides, we introduce different metrics to evaluate the accuracy of the results from three different aspects.

To better guide the application of the schemes in practice, we also conduct supplementary experiments on more datasets and test the computation and communication overheads. Please refer to Appendix \ref{app:supp_exp} for details.
\begin{figure*}[t]
  \centering
  \includegraphics[width=0.9\textwidth]{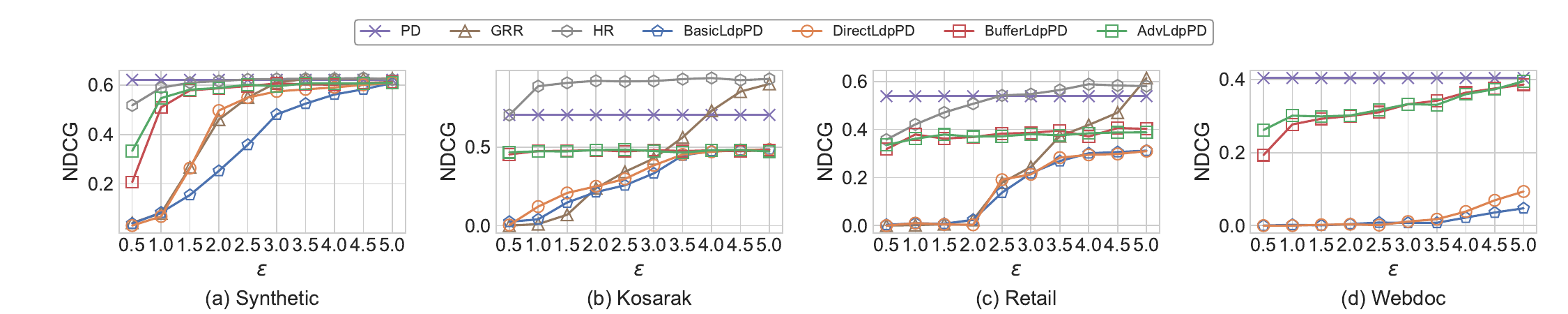}
  \vspace{-2ex}
  \setlength{\belowcaptionskip}{-1ex}
  \caption{Evaluation of NDCG for Top-$20$ on both synthetic and real-world datasets while taking $1\%$ data for warm-up stage.}\vspace{-0.05in}
  \label{fig:ndcg_comp}
\end{figure*}

\begin{figure*}[t]
  \centering
  \includegraphics[width=0.9\textwidth]{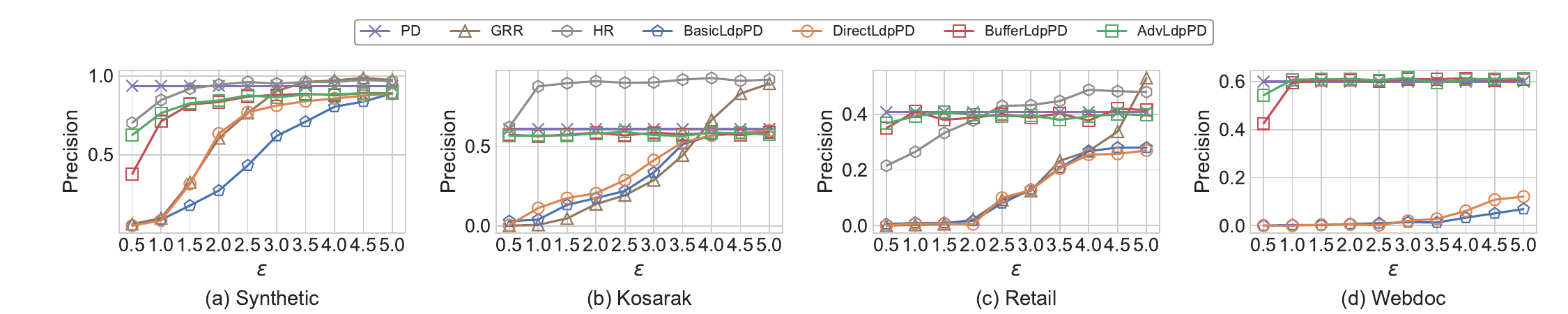}
  \vspace{-2ex}
  \setlength{\belowcaptionskip}{-1ex}
  \caption{Evaluation of Precision for Top-$20$ on both synthetic and real-world datasets while taking $1\%$ data for warm-up stage.}\vspace{-0.05in}
  \label{fig:precision_comp}
\end{figure*}

\begin{figure*}[t]
  \centering
  \includegraphics[width=0.9\textwidth]{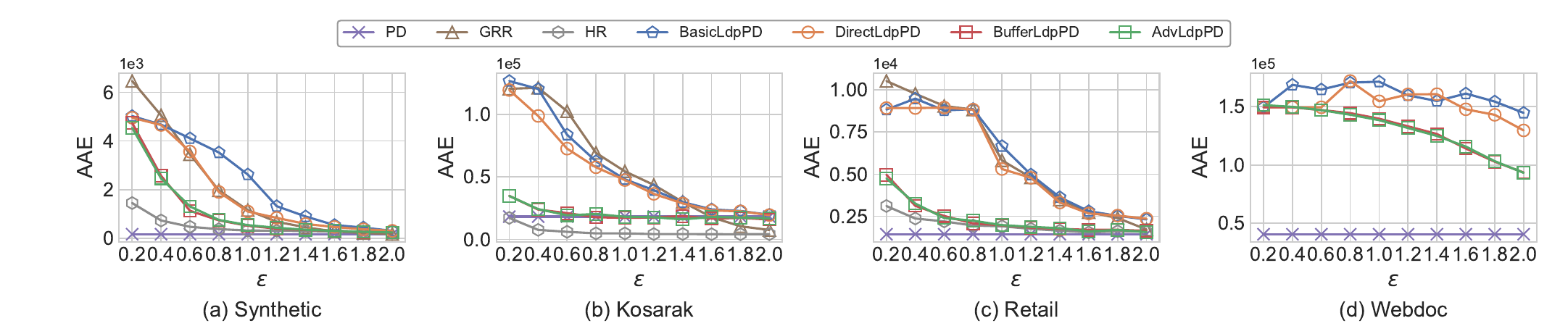}
  \vspace{-2ex}
  \setlength{\belowcaptionskip}{-1ex}
  \caption{Evaluation of AAE for Top-$20$ on both synthetic and real-world datasets while taking $1\%$ data for warm-up stage.}\vspace{-0.1in}
  \label{fig:re_comp}
\end{figure*}




\subsection{Setup}
\noindent\textbf{Datasets.}
We run experiments on the following datasets:
\begin{itemize}
  \item Several synthetic datasets are generated with two different distributions and three domain sizes. One kind of datasets are generated by randomly sampling data from a Normal distribution with variance $\sigma=5$, and others are generated from an Exponential distribution with variance $\sigma=10$. There are $n=100,000$ values in each dataset.
\vspace{0.05in}
  
  \item Retail dataset \cite{retail} contains the retail market basket data from an anonymous Belgian retail store with around $0.9$ million values and $16k$ distinct items.

  \vspace{0.05in}

  \item Kosarak dataset \cite{kosarak} contains the click streams on a Hungarian website, with around $8$ million values and $42k$ URLs.

\vspace{0.05in}

  \item Webdocs dataset \cite{webdocs} is constructed from a collection of web HTML documents, which comprises around 300 million records, and 5.26 million distinct items.
\end{itemize}

\noindent\textbf{Metrics.}
In reality, various applications focus on different aspects of the heavy hitter estimation results.
Therefore, we have to comprehensively evaluate the quality of the results from three aspects:
(1) how accurately that $\mathcal{HG}$ captures the actual heavy hitters;
(2) how accurately that the ordering of the heavy hitters in $\mathcal{HG}$;
(3) how accurately that $\mathcal{HG}$ captures the actual counts of heavy hitters.
We use the following three metrics to cover each aspect:

\emph{Precision.} 
It measures the accuracy of the actual heavy hitters captured by $\mathcal{HG}$.
It is the number of actual heavy hitters divided by the number of all items in $\mathcal{HG}$, as given by
\[Precision=\frac{\#Actual\ heavy\ hitters\ in\ \mathcal{HG}}{\# Heavy\ hitters}.\]

\emph{Normalized Discounted Cumulative Gain (NDCG).}
It measures the ordering quality of the heavy hitters captured by $\mathcal{HG}$, which is a common effectiveness in recommendation systems and other related applications.
$NDCG$ is between $0$ and $1$ for all $k$, and the closer it is to $1$ means the ordering quality of $\mathcal{HG}$ is higher.
The formulas for calculating NDCG is deferred to Appendix \ref{app:ndcg}.

\emph{Average Absolute Error (AAE).}
It measures the error of the counts of the actual Top-$k$ items with their estimated counts recorded in $\mathcal{HG}$, which can be calculated as
\[AAE_k=\frac{1}{k}\sum_{i=1}^k |f_{actual}(v_i)-f_{estimated}(v_i)|.\]
If an actual hot item is not recorded by $\mathcal{HG}$, its AAE is calculated by setting the estimated count as $0$.
For consistent and fair comparisons, we post-process all counts recorded by $\mathcal{HG}$ to $0$ when calculating AAE. 
All results in experiments are averaged with $20$ repeats.

\subsection{Implementation Details}

We fully implemented our schemes and all baselines in Java to provide unified concrete performance comparisons. For all schemes, we separately implement the server and the client side, and the perturb data for communication are serialized to `byte[]'. This makes our implementation easier to be deployed in practice, in which the server and clients would communicate via network channels using byte strings. 
In our experiments, focus more on the effectiveness of our schemes so that we run the server and the client on a single process. 
All experiments are run on Ubuntu 20.04 with 96 Intel Xeon 2.20 GHz CPU and 256 GB RAM. Our source code is available for public request.
Besides, we have some improvements compared with the original implementation in our re-implementation for both LDP mechanisms and original \emph{HeavyGuardian}. More implementation details are deferred to Appendix \ref{app:imple-detail}.
\begin{figure*}[t]
  \centering
  \includegraphics[width=0.9\textwidth]{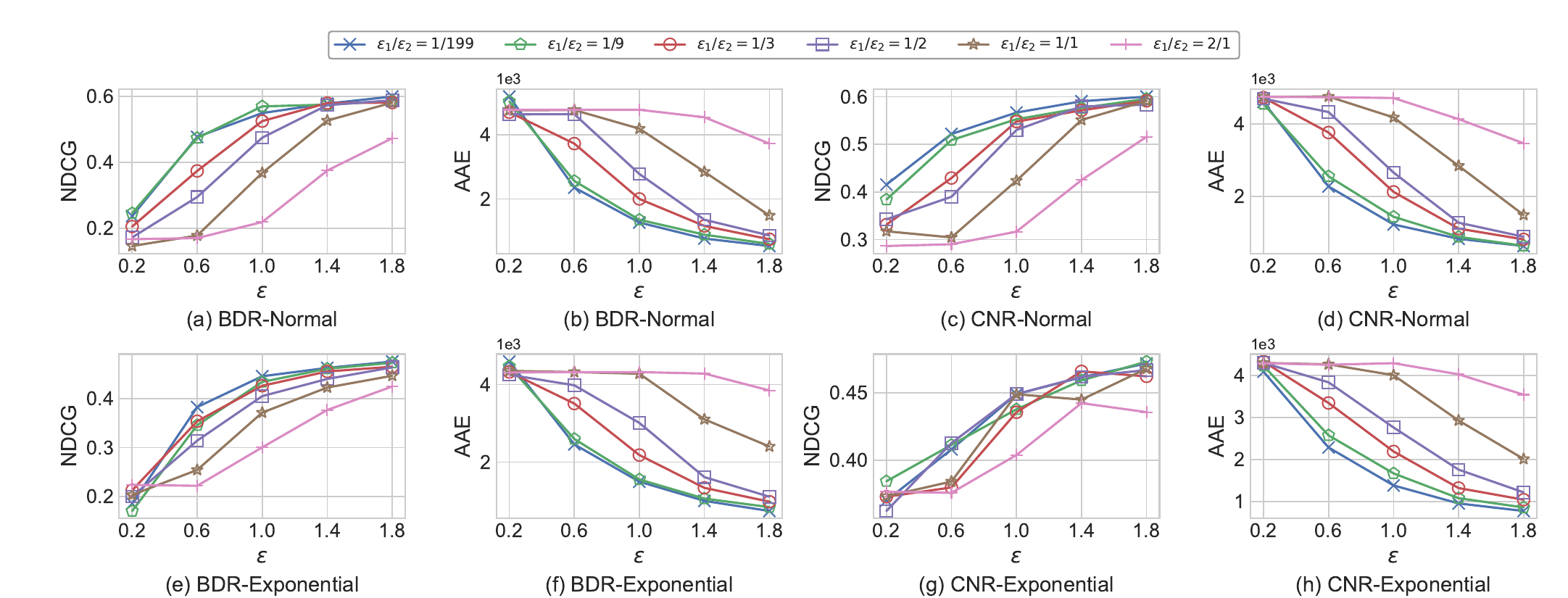}
  \vspace{-2ex}
  \setlength{\belowcaptionskip}{0ex}
  \caption{Accuracy of BDR and CNR vs. different allocations of privacy budget; Conducted on the two synthetic datasets with Normal distribution and Exponential distribution, where the domain size $d=1000$, and taking $1\%$ data for warm-up stage.}\vspace{-0.15in}
  \label{fig:adv_para}
\end{figure*}



\subsection{Analysis of Experimental Results}
\noindent\textbf{Comparison of Accuracy.}
We compare the accuracy of the baseline scheme and three advanced schemes with the non-private \emph{HeavyGuardian} and two LDP mechanisms: Generalized Randomized Response (GRR) and Hadamard Response (HR) (HR performs the best in our evaluation, see Figure \ref{fig:LDP_mech} in Appendix \ref{app:imple-detail}).
We evaluate all schemes on the Synthetic, Retail, Kosarak, and Webdocs datasets. 
The results for three metrics: NDCG, Precision, and AAE are shown in Figure \ref{fig:ndcg_comp}, Figure \ref{fig:precision_comp}, and Figure \ref{fig:re_comp}, respectively.
Since running GRR and HR exceeds the computing or storage capabilities of our server, we only show the results of our schemes on the Webdocs dataset.
In each figure, we vary the privacy budget $\epsilon$ within a range of $[0.5, 5]$.
All schemes involve a warm-up stage for fairness of the comparison.

Firstly, we observe that the accuracy of the proposed schemes BGR, DSR, BDR, and CNR improves sequentially.
The improvement of DSR compared with BGR is more obvious as $\epsilon$ increases, and the advantage of CNR over BDR is more significant as $\epsilon$ decreases.
We think the reason is that when $\epsilon$ is large, i.e., $\epsilon>1$, the randomized hot items are still concentrated and there are fewer times to randomize on the entire domain to provide specific cold items for replacing with the weakest hot items in $\mathcal{HG}$, thus the improvement achieved by DSR is relatively significant.
When $\epsilon$ is small, i.e., $\epsilon<1$, the distribution of the randomized data is relatively uniform, thus the weakest hot item in $\mathcal{HG}$ always need to be replaced.
In this case, the advantage of CNR compared to BDR in providing more potential cold items to enter $\mathcal{HG}$ can be more obvious. 
Besides, we find that these observations are not pronounced on two real-world datasets. 
The reason is that those real-world datasets have large data domains and irregular data distributions.
Therefore, $\mathcal{HG}$ needs to replace the items frequently even if the $\epsilon$ is relatively large.
This means that DSR always randomizes the data on the entire domain in the same way as BGR.
In addition, the large data domain can also lead to low accuracy in the light part of $\mathcal{HG}$.
Then the performance of the CNR is similar to BDR in this case.

Secondly, compared with the non-private \emph{HeavyGuardian} and memory-unlimited LDP randomization mechanisms, BDR and CNR outperform GRR on all datasets in terms of all metrics when $\epsilon<3$.
Moreover, their accuracy on the synthetic dataset is close to HR, and the accuracy on all datasets is close to non-private \emph{HeavyGuardian}.
In all three datasets, BDR and CNR are set to $\epsilon_1/\epsilon_2=0.5$, and their parameter $\gamma_h$ is calculated during the warm-up stage.
We also observe that the performance of BGR and DSR gradually dominates that of GRR as the size of the data domain increases when $\epsilon<3.5$.
However, their accuracy is much lower than that of BDR and CNR when the domain size is extremely large.

Finally, we observe that the NDCG of all schemes is slightly lower than their Precision on all datasets.
The main reason is that NDCG considers the ordering weights of the hit items in addition to whether the true hot items are hit or not.
Besides, the comparison results of all schemes in terms of AAE on all datasets are consistent with the comparison of NDCG and Precision.
The AAE of the statistical results of BGR, DSR, BDR, and CNR decreases in turn.

\vspace{-0.1in}

\begin{figure}[!h]
  \centering
  \includegraphics[width=0.45\textwidth]{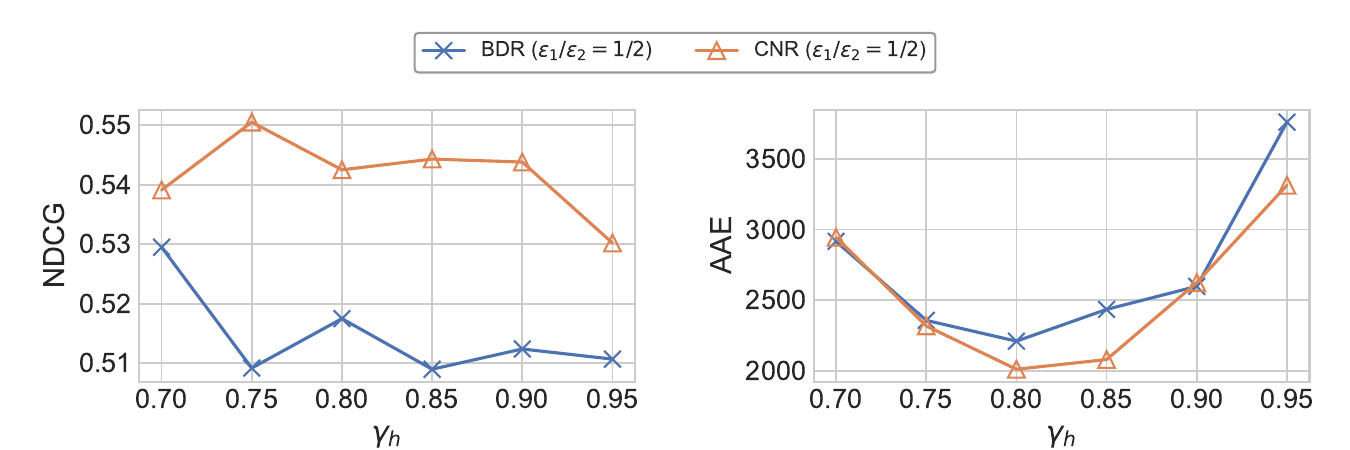}
  \vspace{-2ex}
  \setlength{\belowcaptionskip}{0ex}
  \caption{The impact of parameter $\gamma_h$ on accuracy of BDR and CNR. The evaluation is conducted on the synthetic dataset with normal distribution, where taking $1\%$ data for warm-up stage and $\epsilon=0.6$.}\vspace{-0.2in}
  \label{fig:gammah}
\end{figure}

\begin{figure}[!h]
  \centering
  \includegraphics[width=0.45\textwidth]{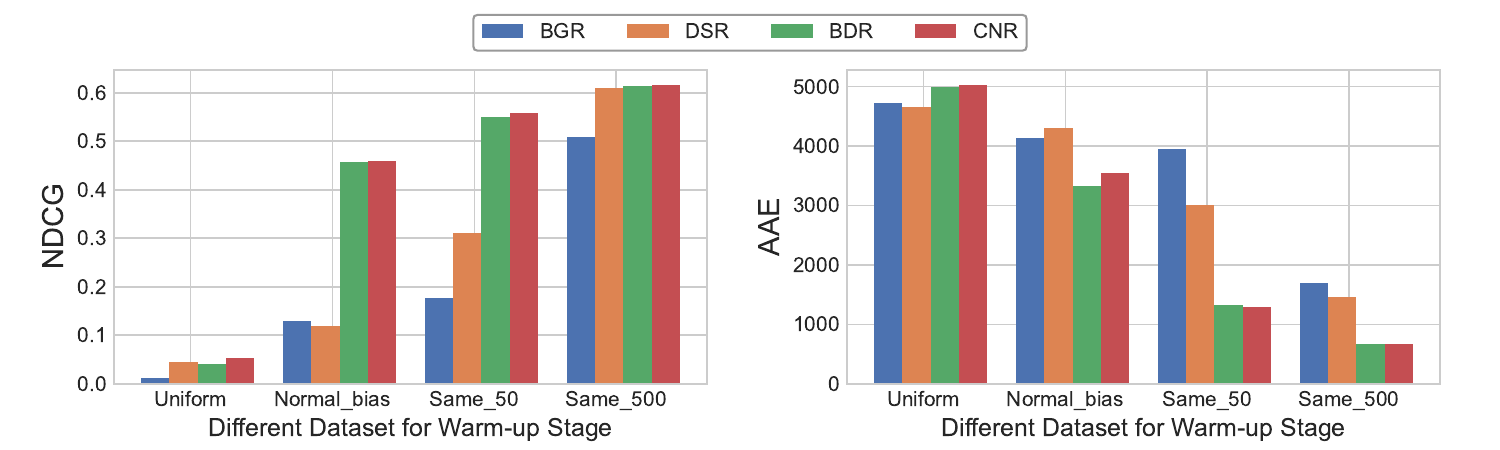}
  \vspace{-2ex}
  \setlength{\belowcaptionskip}{0ex}
  \caption{The impact of the warm-up stage on the accuracy of the proposed schemes for tracking Top-$k$ items on synthetic dataset with normal distribution, where $\epsilon=2$.}
  \label{fig:warmup}
\end{figure}

\medskip
\noindent\textbf{Impact of Key Parameters on the Accuracy.}
We evaluate the impact of key parameters on the accuracy of the proposed schemes by varying them within a certain range.
In order to eliminate the interference of irregular distribution on the evaluation, we conduct experiments on several synthetic datasets.
Due to limited space, we only present the NDCG and Precision of the statistical results on two synthetic datasets with Normal distribution and Exponential distribution.
The results of all metrics on more synthetic datasets with different domain size are deferred to Appendix \ref{app:supp_exp}. 

Firstly, Figure \ref{fig:adv_para} shows the impact of the allocation method of privacy budget $\epsilon$ on the accuracy of BDR and CNR.
We observe that BDR and CNR allocate less privacy budget to $\epsilon_1$ and more privacy budget to $\epsilon_2$ can obtain higher accuracy of the statistical results.
The improvement of NDCG is significant when $\epsilon_1/\epsilon_2$ decreases from $2/1$ to $1/9$, and the increase slows down after $\epsilon_1/\epsilon_2$ is less than $1/9$.
We think the reason is that a hot item recorded in $\mathcal{HG}$ is randomized to a cold item with a greater probability when $\epsilon_1$ is small, and the number of data that is a hot item is larger than data that is a cold item, which leads to the items in $\mathcal{HG}$ are easier to be evicted.
Meanwhile, increasing $\epsilon_2$ can improve the correctness of the orders of the items recorded in $\mathcal{HG}$.
Therefore, reducing $\epsilon_1/\epsilon_2$ can increase the probability that the hot items with a small count are replaced by other cold items, so that the real hot items can occupy the $\mathcal{HG}$ faster.
This is also consistent with the experimental results in original $\mathcal{HG}$ (Figure 4(a), \cite{DBLP:conf/kdd/0003GZZSL18}).
The accuracy of the result increases when the parameter $b$ is reduced to make it easier for the new item to enter $\mathcal{HG}$, but the improvement becomes no longer obvious when $b$ is reduced to a certain small value.
Therefore, we recommend setting $\epsilon_1/\epsilon_2=1/9$ to get near-optimal accuracy in the actual deployment of BDR and CNR.
We also conduct the evaluations on synthetic datasets with different domain sizes, and obtain the consistent observations with the above.
The results are shown in Figure \ref{fig:adv_ndcg}-Figure \ref{fig:buff_re} in Appendix \ref{app:supp_exp}.
Moreover, we find that increasing the domain size has some impact on the accuracy of the schemes, but we can still improve the accuracy by adjusting the privacy budget allocation.

Then Figure \ref{fig:gammah} shows the impact of the parameter $\gamma_h$ on the accuracy of the BDR and CNR. We calculate the exact $\gamma_h\simeq 0.92$.
As a debiasing parameter, $\gamma_h$ directly affects the counts of the statistical result, so the impact of $\gamma_h$ can be clearly observed from the AAE of the result.
However, the indirect impact on NDCG is not obvious, the lines in the figure are fluctuating.
An interesting phenomenon can be observed from the AAE of the results.
More accurate $\gamma_h$ does not necessarily give more accurate count of the result.
The reason is that the ED strategy continuously reduces the counts of the weakest item with a certain probability, which causes the statistical results to be underestimated.
According to debiasing Equation \ref{eq:adv_debias}, reducing $\gamma_h$ can cause the debiased result to be over-estimated, thereby offsetting part of the bias introduced by the ED strategy. 

\begin{table}[t]
  \small
  \centering
  \caption{Comparison of memory size (KB) consumed by schemes on different datasets.}\vspace{-0.1in}
  \label{tab:memory}
  \begin{tabular}{ccccc}
      \hline
       \diagbox{Scheme}{Dataset} & Synthesize & Kosarak & Webdocs\\
       \hline
       \emph{HeavyGuardian} & 2.40 & 2.40 & 2.41\\
       \hline
       GRR & 153.33 & 6154.99 & -\\
       \hline
       HR & 153.41 & 6249.77 & -\\
       \hline
       \textbf{BGR} & \textbf{2.66} & \textbf{2.66} & \textbf{2.67}\\
       \hline
       \textbf{DSR} & \textbf{2.73} & \textbf{2.73} & \textbf{2.75}\\
       \hline
       \textbf{BDR} & \textbf{2.69} & \textbf{2.68} & \textbf{2.70}\\
       \hline
       \textbf{CNR} & \textbf{3.43} & \textbf{3.09} & \textbf{3.43}\\
      \hline
  \end{tabular}
\end{table}

Finally, Figure \ref{fig:warmup} shows the impact of the warm-up stage on the accuracy of the baseline BGR and the proposed three schemes. We compared their accuracy using five different datasets for the warm-up stage.
The five datasets include a uniformly random dataset with the size of $50$, a dataset with the size of $50$ and distribution skewed from the true normal distribution, and two datasets with the true normal distribution with sizes of $50$ and $500$.
We can observe that their accuracy increases as the distribution of the dataset used in the warm-up stage approaches the true distribution and as the size of the dataset increases.
Specifically, BDR and CNR set $\epsilon_1/\epsilon_2=0.5$, and they are least affected by the warm-up stage among all schemes.
We think the reason is that the current cold items in BDR and CNR are easier to enter $\mathcal{HG}$ to become new hot items, which reduces the impact of the accuracy of the initial $\mathcal{HG}$ on the final statistical result.
Similar observations can be obtained on the real-world datasets, and the results are shown in Figure \ref{fig:warmup_kosarak} in Appendix \ref{app:supp_exp}.

\medskip
\noindent\textbf{Comparison of Memory Consumption.}
We then evaluate the total memory size consumed by all schemes when tracking Top-$20$ heavy hitters on the four different datasets.
We present the results in Table \ref{tab:memory}.
The proposed schemes show a significant advantage in memory consumption when the data domain is large, such as in the Kosarak and Webdocs datasets.
We can observe that the memory size consumed by the GRR and HR increases linearly as the domain size $d$ increases. 
In contrast, the memory consumed by all space-saving schemes is only related to the number of tracked heavy hitters $k$, and $k$ is usually much smaller than $d$.
Note that the GRR and HR do not have memory consumption available for the Webdocs dataset since the computation or memory requirements of these schemes exceeded the capacity of the server used for testing.
Additionally, we conduct tests to evaluate the computation and communication overhead of all schemes. The detailed results of these tests can be found in Appendix \ref{app:supp_exp}.

\section{Discussion}
\label{subsec:discuss}
In this section, we supplementally discuss more details about the practical implementation and the potential extension of these schemes.

\subsection{System Parameters \& Implementations}
\noindent\textbf{Warm-up Stage.}
In practice, our framework along with the theoretical designs and analyses are applied to a steady state where the $\mathcal{HG}$ are filled during previous timestamps, rather than dealing with the cold-start scenario where $\mathcal{HG}$ is empty. Therefore, to simulate such a steady state where $\mathcal{HG}$ is properly warm-started, we consider all the proposed schemes include a warm-up round at the beginning of the statistics.
Note that although CNR needs to use the light part in $\mathcal{\mathcal{HG}}$ structure, only the heavy part should be filled in the warm-up stage like other schemes. 
The data for the warm-up stage can be a priori dataset stored on the server or data voluntarily contributed by users in the first round of the statistics.
There is no specific requirement for the number of data in the warm-up stage.
The only requirement is that the data should at least be able to fill the $\mathcal{HG}$.
Besides, the closer the distribution of the priori dataset to the real data distribution, or the larger the number of data that users voluntarily contribute, the higher the accuracy of $\mathcal{HG}$ in the subsequent statistics.

\medskip
\noindent\textbf{Parameter $\gamma_h$ in BDR and CNR.}
The debiasing formula for both BDR and CNR contains a parameter $\lambda_h$, which is the proportion of data that is the hot item in the stream.
The server actually does not know the specific value of $\gamma_h$, but it can be theoretically calculated based on prior knowledge about the data distribution.
If the server has no prior knowledge about the data distribution, $\gamma_h$ can also be statistically obtained from the initial $\mathcal{HG}$ after the warm-up stage.
Certainly, $\gamma_h$ obtained by the above two methods both inevitably introduce additional errors to the estimated results, and the impact is evaluated in the experiments.
However, the current design of schemes cannot avoid it, and we leave it for future work.

\medskip
\noindent\textbf{Privacy Parameters $\epsilon_1,\ \epsilon_2$ in BDR and CNR.}
\revision{
Next, we analyze how to split the privacy budget $\epsilon$ into $\epsilon_1$ and $\epsilon_2$ in BDR and CNR, based on insights from our theoretical and experimental results. }

\revision{
Our theoretical analysis in Theorem \ref{the:AdvLdpPD}, provides an error bound for estimating the count of hot items in BDR, which is equally applicable to CNR. 
It shows that allocating a larger portion of the privacy budget to $\epsilon_2$ leads to a reduced error bound, which is further corroborated by our experimental results in Figure \ref{fig:adv_para}(b)(d)(f)(h). 
In fact, the count error only focuses on the accuracy of counts for hot items already identified by the data structure. This calculation excludes errors coming from the misclassification of hot items due to randomization with $\epsilon_1$. }

\revision{
However, the estimation is complex when considering the impact of $\epsilon_1$ and $\epsilon_2$ on the precision of the data structure $\mathcal{HG}$ in capturing the true hot items. Increasing the privacy budget allocated to $\epsilon_1$ does reduce the probability of determining hot data as cold and simultaneously enhances the probability that currently recorded items remain within $\mathcal{HG}$. 
Nevertheless, this does not necessarily get an improved precision in capturing items within $\mathcal{HG}$. 
The setting of parameter $b$ in $\mathcal{HG}$ \cite{DBLP:conf/kdd/0003GZZSL18} faces the same dilemma.
Increasing $b$ will reduce the probability of the current cold values entering $\mathcal{HG}$, and vice versa.
Multiple factors collaboratively impact the precision of $\mathcal{HG}$ in capturing hot items. 
For instance, when the initial $\mathcal{HG}$ captures inaccurate hot items, a higher probability of eviction among recorded items within $\mathcal{HG}$ can lead to improved precision; if the true hot items are concentrated in the first half of the data stream, a higher probability of retention for items within $\mathcal{HG}$ can result in higher precision.
It can also be observed from the experimental results that the Precision and NDCG of the results on some data streams are not as regular as those of AAE as $\epsilon_1$ and $\epsilon_2$ change, i.e., when hot items are distributed in a more dispersed manner within the Exponential distribution as opposed to the Normal distribution, the NDCG depicted in Figure \ref{fig:adv_para} emphasize that allocating a smaller fraction of $\epsilon_1$ does not confer any discernible advantage.
In \cite{DBLP:conf/kdd/0003GZZSL18}, they provide an empirical value, i.e., $b=1.08$.
Based on our comprehensive evaluations, we suggest setting $\epsilon_1/\epsilon_2=0.5$ in most scenarios can achieve promising accuracy.
}

\medskip
\noindent\textbf{Guidance on Scheme Selection.}
\revision{
In this paper, we introduce three enhanced schemes, each making distinct trade-offs between accuracy, computational overhead, and memory usage.
According to our theoretical and experimental results, we summarize a table, as detailed in Table \ref{tab:comparision_four}, including three advanced designs and a baseline in terms of accuracy, computation overhead, and memory consumption.
From the baseline BGR to DSR, BDR, and CNR, there is a sequential improvement in the accuracy of the results.
Meanwhile, this enhancement comes at the cost of increased computational complexity on the client side or memory consumption on the server side.
In practical deployment, we recommend selecting a scheme based on the specific performance requirements of the task.
}
\begin{table}[!h]
  \small
  \centering
  \caption{Performance comparison of baseline method and proposed schemes.}\vspace{-0.1in}
  \label{tab:comparision_four}
  \begin{tabular}{ccccc}
      \hline
        & BGR & DSR & BDR & CNR\\
       \hline
       Accuracy & 4th & 3rd & 2nd & 1st\\
       \hline
       Computation Overheads & 1st & 2nd & 3rd & 4th\\
       \hline
       Memory Consumption & 1st & 1st & 1st & 2nd\\
       \hline
  \end{tabular}
\end{table}
\subsection{Extensions}
\noindent\textbf{$w$-Event-Level and User-Level Privacy.}
\revision{
While the schemes proposed in this paper offer event-level privacy guarantees, they possess the flexibility to be extended to offer enhanced privacy protection, including $w$-event-level privacy and user-level privacy. Specifically, $w$-event-level privacy ensures $\epsilon$-LDP within any sliding window of size $w$, while user-level privacy guarantees $\epsilon$-LDP for all streaming data contributed by an individual user.}

\revision{
To achieve $w$-event-level privacy and user-level privacy for finite data streams, we could distribute the privacy budget evenly across each timestamp. 
This entails changing the privacy budget used for randomizing each streaming data point from $\epsilon$ to $\epsilon/w$ and $\epsilon/l$, where $l$ represents the length of the finite data stream.
We have to mention that while there are proposed methods for privacy budget allocation that outperform the average allocation approach \cite{farokhi2020temporally, DBLP:journals/pvldb/KellarisPXP14}, applying them to our proposed schemes presents certain challenges.
The primary obstacle lies in the variation of privacy budgets used to randomize each streaming data, which can impede the server to debias the accumulated counts in the heavy list. 
This complication also obstructs the application of the schemes to provide user-level privacy for infinite data streams.
An intuitive approach to address this issue is that the server to independently debias each incoming streaming data point using the privacy budget transmitted by the user concurrently.
However, this approach may introduce increased computational complexity on the server's end and heightened communication complexity for the user. 
We leave this challenge for future research and exploration.
}

\medskip
\noindent\textbf{Other Tasks.}
\revision{
Since the proposed framework HG-LDP focuses on the heavy hitter estimation task, only CNR involves the Light part of the data structure $\mathcal{HG}$ to store the counts of part of cold items.
When CNR extends its functionality to store the counts of all cold items in the Light part as in \cite{DBLP:conf/kdd/0003GZZSL18}, it can also support other tasks supported in \cite{DBLP:conf/kdd/0003GZZSL18}, such as frequency estimation and frequency distribution estimation.
It's essential to note that these tasks, even functionally supported, encounter a challenge related to accuracy when randomizing within large data domains. The new LDP randomization mechanisms in this paper are designed by utilizing the characteristics of the heavy hitter tasks to only ensure the accuracy of hot items.
We intend to delve deeper into this aspect as part of our future research efforts.
}

\section{Related Work}
An extended Related Work is in Appendix \ref{app:related}.

\vspace{0.05in}

\noindent\textbf{Differential Private Data Stream Collection}
The earliest studies in differential privacy for streaming data collection originate from continuous observation of private data \cite{evfimievski2003limiting, bansal2008improved, erlingsson2014rappor, ding2017collecting, dwork2010differential}.
Recent works on differential private data stream collection mainly focus on Centralized Differential Privacy (CDP).
Some works study how to publish the summation of the streaming data privately \cite{perrier2018private, wang2021continuous, DBLP:journals/pvldb/KellarisPXP14, farokhi2020temporally}.
Some works study the release of correlated streaming data \cite{wang2017cts, bao2021cgm} propose a correlated Gaussian noise mechanism.
Some recent works focus on data stream collection with Local Differential Privacy (LDP) \cite{joseph2018local, DBLP:conf/sigmod/RenSYYZX22, wang2021continuous}.

\vspace{0.05in}

\noindent\textbf{Tracking Heavy Hitters in Data stream}
Mining streaming data faces three principal challenges: \emph{volume}, \emph{velocity}, and \emph{volatility} \cite{krempl2014open}.
The existing heavy hitters estimation algorithms in the data stream can be divided into three classes: Counter-based algorithms, Quantile algorithms, and Sketch algorithms \cite{DBLP:journals/pvldb/CormodeH08}. 
Counter-based algorithms track the subset of items in the stream, and they quickly determine whether to record and how to record with each new arrival data \cite{DBLP:conf/vldb/MankuM02, DBLP:conf/icdt/MetwallyAA05, DBLP:conf/kdd/0003GZZSL18, DBLP:conf/sigmod/Zhou0J0YLU18}.
The Quantile algorithms \cite{greenwald2001space, shrivastava2004medians} focus on finding the item which is the smallest item that dominates $\phi n$ items from the data stream.
Sketch algorithms \cite{alon1999space, cormode2005improved, DBLP:conf/kdd/LiLXJ00DZ20, chi2004moment} record items with a data structure, which can be thought of as a linear projection of the input, hash functions are usually used to define that. 
\revision{However, the sketch algorithms involve a large number of hash operations, which cannot meet the timeliness requirements of streaming data.
Besides, all items are recorded and additional information needs to be stored for retrieval, which leads to unnecessary memory consumption \cite{DBLP:journals/pvldb/CormodeH08}}. Our design is based on Counter-based algorithms with an extended setting where streaming data is protected by LDP.

\section{Conclusion}
In this paper, we proposed a framework HG-LDP for tracking the Top-$k$ heavy hitters on data streams at bounded memory expense, while providing rigorous LDP protection.
A baseline and three advanced schemes with new LDP randomization mechanisms are designed under the hood of the framework.
We implement all the proposed schemes and evaluate them on both synthetic and real-world datasets in terms of accuracy and memory consumption.
The experimental results demonstrated that the proposed schemes achieve a satisfactory ``accuracy-privacy-memory efficiency'' tradeoff.
For future work, we will extend the framework to be compatible with more diverse selections of memory-efficiency data structures as well as broader types of statistical tasks to enhance its flexibility. 



\begin{acks}
This work is supported by the National Key Research and Development Program of China under Grant 2021YFB3100300, and the National Natural Science Foundation of China under Grant U20A20178 and 62072395. Weiran Liu is supported in part by the Major Programs of the National Social Science Foundation of China under Grant 22\&ZD147. Yuan Hong is supported in part by the National Science Foundation under Grants CNS-2308730, CNS-2302689, CNS-2319277, CMMI-2326341 and the Cisco Research Award.
\end{acks}
\bibliographystyle{ACM-Reference-Format}
\bibliography{sample-base}

\clearpage
\appendix
\section{Appendices}
\subsection{LDP Mechanisms}
\label{app:ldp-mechanism}
\noindent\textbf{Optimal Local Hash.}
The Optimal Local Hash (OLH) mechanism \cite{wang2017locally} is designed for randomizing private values in a large domain.
It maps the value with a randomly selected hash function to a new data domain with the size of $g<<d$ before randomizing the value.
The randomization method is the same as GRR with $p'$ and $q'$ as follows
\begin{equation}
	\label{eq:olh}
	\left\{
	\begin{aligned}	
		p' & = \frac{e^{\epsilon}}{e^{\epsilon} + g - 1}, \\
		q' & = \frac{1}{e^{\epsilon} + g - 1}.
	\end{aligned}
	\right. 
\end{equation}

It can also use the generic method to estimate the count $\tilde c_{i}$ with $p=p'$ and $q=\frac{1}{g}p'+\frac{g-1}{g}q'=\frac{1}{g}$.
When the size of the new data domain $g=e^\epsilon+1$, the variance of $\tilde c_{i}$ can be minimized as
\begin{equation}
  \label{eq:var_olh}
Var[\tilde c_{i}] = n\cdot\frac{4e^\epsilon}{(e^\epsilon-1)^2}
\end{equation}

\noindent\textbf{Hadamard Response.}
The Hadamard Response (HR) mechanism \cite{acharya2019hadamard, acharya2019communication} encodes private values with a $K\times K$ Hadamard matrix, where $K=2^{\lceil\log_2(d+1)\rceil}$.
Expect for the first row of the matrix (all values are `1'), each other row corresponds to a value in the data domain.
When encoding the $i^{th}$ value in the data domain, the output value is randomly selected from column indices that have `1' in the $(i+1)^{\text{th}}$ row with the probability of $p$, and selected from other indices (columns have `0') with the probability of $q$, where
\begin{equation}
  \label{eq:hr}
	\begin{cases}
		p=\frac{e^{\epsilon}}{1+e^{\epsilon}}\\
		q=\frac{1}{1+e^{\epsilon}}\\
	\end{cases}
\end{equation}

Then $\tilde c_{i}$ can be calculated as follows
\begin{equation}
  \label{eq:est_hr}
  \tilde c_i = \frac{2(e^{\epsilon}+1)}{e^{\epsilon}-1}(\hat c_i-\frac{n}{2})
\end{equation}

The variance of the estimation result $\tilde c_{i}$ is
\begin{equation}
  \label{eq:var_hr}
Var[\tilde c_{i}] = n\cdot\frac{4(e^\epsilon+1)^2}{(e^\epsilon-1)^2}
\end{equation}

\subsection{Proof of Theorem \ref{the:LdpPD}}
\label{app:ldppd_the}
\begin{proof}
  We assume that $\hat f_i$ is the frequency of noisy data recorded in $\mathcal{HG}$ according to the \emph{ED strategy} in BGR.
  Meanwhile, the \emph{ED strategy} would introduce additional error when recording $\hat f_i$, while the frequency actually recorded and used for debiasing by the GRR mechanism before publishing is $\bar f_i$.
  According to Lemma \ref{lem:pd_error}, we have the upper bound of $\hat f_i-\bar f_i$ is
  \begin{align*}
    &Pr[\hat f_i - \bar f_i\ge \alpha t]\le \frac{1}{2\alpha t}(\hat f_i-\sqrt{\hat f_i^2-\frac{4P_{weak}E(V)}{b-1}})\\
    &\Rightarrow Pr[\hat f_i \le \bar f_i + \alpha t]\ge (1-\frac{1}{2\alpha t}(\hat f_i-\sqrt{\hat f_i^2-\frac{4P_{weak}E(V)}{b-1}}))\\
  \end{align*}
  The distribution of $\hat f_i$ can be decomposed into $Bin(f_i, p)+Bin(t-f_i, q)$.
  Denote $Bin(f_i, p)$ as $P_1$ and $Bin(t-f_i, q)$ as $P_2$, according to the hoeffding inequality, we have
  \begin{align*}
    Pr[f_i p-P_1\le \zeta]\ge 1-e^{-\frac{2\zeta^2}{f_i}}, and\\
    Pr[(t-f_i) q-P_2\le \zeta]\ge 1-e^{-\frac{2\zeta^2}{t-f_i}}\\
  \end{align*}
  Then, according to Bonferroni inequality, we have
  \begin{align*}
    &Pr[P_1+P_2\ge f_i p+(t-f_i)q-2\zeta]\ge 1-e^{-\frac{2\zeta^2}{f_i}}-e^{-\frac{2\zeta^2}{t-f_i}}\\
    &\Rightarrow Pr[\hat f_i\ge f_i p+(t-f_i)q-2\zeta]\ge 1-e^{-\frac{2\zeta^2}{f_i}}-e^{-\frac{2\zeta^2}{t-f_i}}\\
    &\Rightarrow Pr[\frac{\hat f_i-tq}{p-q}-f_i\ge\frac{-2\zeta}{p-q}]\ge 1-e^{-\frac{2\zeta^2}{f_i}}-e^{-\frac{2\zeta^2}{t-f_i}}\\
    &\Rightarrow Pr[f_i-\frac{\bar f_i + \alpha t-tq}{p-q}\le f_i-\frac{\hat f_i-tq}{p-q}\le\frac{2\zeta}{p-q}]\\
    &\ge (1-e^{-\frac{2\zeta^2}{f_i}}-e^{-\frac{2\zeta^2}{t-f_i}})(1-\frac{f_i}{2\alpha t}(1-\sqrt{1-\frac{4P_{weak}E(V)}{\hat f_i^2(b-1)}}))
   \end{align*}
   \begin{align*}
    &\Rightarrow Pr[f_i-\tilde f_i\le\frac{2\zeta+\alpha t}{p-q}]\\
    &\ge (1-2e^{-\frac{2\zeta^2}{t}})(1-\frac{1}{2\alpha}(1-\sqrt{1-\frac{4P_{weak}E(V)}{b-1}}))\\
    &\Rightarrow Pr[f_i-\tilde f_i\le(\sqrt{2t\log (2/\beta)}+\alpha t)\cdot\frac{e^{\epsilon+d-1}}{e^{\epsilon}-1}]\\
    &\ge(1-\beta)(1-\frac{1}{2\alpha}(1-\sqrt{1-\frac{4P_{weak}E(V)}{b-1}}))\\
  \end{align*}
  where $\zeta=\sqrt{t\log (2/\beta)/2}$, $P_{weak}=\frac{(i-1)!(d-k)!}{(d-1)!(i-k)!}$, and $E(V)=\sum_{j=i+1}^d f_j$.
\end{proof}

\subsection{Proof of Theorem \ref{the:AdvLdpPD}}
\label{sec:Adv_the}
\begin{proof}
  We assume that $\hat f_i$ is the frequency of noisy data recorded in $\mathcal{HG}$ according to the \emph{ED strategy} in BDR.
  Meanwhile, the \emph{ED strategy} would introduce additional error when recording $\hat f_i$, while the frequency actually recorded and used for debiasing by the randomization mechanism before publishing is $\bar f_i$.
  According to Lemma \ref{lem:pd_error}, we have the upper bound of $\hat f_i-\bar f_i$ is
  \begin{align*}
    &Pr[\hat f_i - \bar f_i\ge \alpha t]\le \frac{1}{2\alpha t}(\hat f_i-\sqrt{\hat f_i^2-\frac{4P_{weak}E(V)}{b-1}})\\
    &\Rightarrow Pr[\hat f_i \le \bar f_i + \alpha t]\ge (1-\frac{1}{2\alpha t}(\hat f_i-\sqrt{\hat f_i^2-\frac{4P_{weak}E(V)}{b-1}}))\\
  \end{align*}
  The distribution of $\hat f_i$ can be decomposed into $Bin(f_i,p_1p_2)+Bin(N_h-f_i,p_1q_2)+Bin(t-N_h,q_1/k)$, where $N_h$ is the number of hot items, $N_h\le t$.
  Denote $Bin(f_i,p_1p_2)$ as $P_1$, $Bin(N_h-f_i,p_1q_2)$ as $P_2$, and $Bin(t-N_h,q_1/k)$ as $P_3$, according to the hoeffding inequality, we have
  \begin{align*}
    &Pr[f_ip_1p_2-P_1\le\zeta]\ge 1-e^{-\frac{2\zeta^2}{f_i}},\\
    &Pr[(N_h-f_i)p_1q_2-P_2\le\zeta]\ge 1-e^{-\frac{2\zeta^2}{N_h-f_i}}, and\\
    &Pr[(t-N_h)\cdot\frac{q_1}{k}-P_3\le\zeta]\ge 1-e^{-\frac{2\zeta^2}{t-N_h}}.\\
    \end{align*}
    Then, according to Bonferroni inequality, we have
    \begin{align*}
      &Pr[f_ip_1p_2-P_1+(N_h-f_i)p_1q_2-P_2+(t-N_h)\cdot\frac{q_1}{k}\\
      &-P_3\le 3\zeta]\ge 1-e^{-\frac{2\zeta^2}{f_i}}+1-e^{-\frac{2\zeta^2}{N_h-f_i}}+1-e^{-\frac{2\zeta^2}{t-N_h}}-2\\
      &\Rightarrow Pr[P_1+P_2+P3\ge f_ip_1p_2+(N_h-f_i)p_1q_2\\
      &+(t-N_h)\cdot\frac{q_1}{k}-3\zeta]\ge 1-e^{-\frac{2\zeta^2}{f_i}}-e^{-\frac{2\zeta^2}{N_h-f_i}}-e^{-\frac{2\zeta^2}{t-N_h}}\\
      &\Rightarrow Pr[\hat f_i\ge f_ip_1p_2+(N_h-f_i)p_1q_2+(t-N_h)\cdot\frac{q_1}{k}-3\zeta]\\
      &\ge 1-e^{-\frac{2\zeta^2}{f_i}}-e^{-\frac{2\zeta^2}{N_h-f_i}}-e^{-\frac{2\zeta^2}{t-N_h}}\\
      \end{align*}
     \begin{align*}
      &\Rightarrow Pr[\frac{\hat f_i-N_hp_1q_2-(t-N_h)\cdot\frac{q_1}{k}}{p_1(p_2-q_2)}-f_i\ge\frac{-3\zeta}{p_1(p_2-q_2)}]\\
      &\ge 1-e^{-\frac{2\zeta^2}{f_i}}-e^{-\frac{2\zeta^2}{N_h-f_i}}-e^{-\frac{2\zeta^2}{t-N_h}}\\
      &\Rightarrow Pr[f_i-\frac{\overline {f_i}+\alpha t-N_hp_1q_2-(t-N_h)\cdot\frac{q_1}{k}}{p_1(p_2-q_2)}\\
      &\le f_i-\frac{\hat f_i-N_hp_1q_2-(t-N_h)\cdot\frac{q_1}{k}}{p_1(p_2-q_2)}\le \frac{3\zeta}{p_1(p_2-q_2)}]\\
      &\ge (1-e^{-\frac{2\zeta^2}{f_i}}-e^{-\frac{2\zeta^2}{N_h-f_i}}-e^{-\frac{2\zeta^2}{t-N_h}})\cdot (1\\
      &-\frac{f_i}{2\alpha t}(1-\sqrt{1-\frac{4P_{weak}E(V)}{\hat{f_i}^2(b-1)}}))\\
      &\Rightarrow Pr[f_i-\tilde{f_i}\le \frac{3\zeta+\alpha t}{p_1(p_2-q_2)}]\ge (1-3e^{-\frac{2\zeta^2}{N_h}})\cdot (1\\
      &-\frac{1}{2\alpha}(1-\sqrt{1-\frac{4P_{weak}E(V)}{b-1}}))\\
      &\Rightarrow Pr[f_i-\tilde{f_i}\le(3\sqrt{\frac{N_h\log(3/\beta)}{2}}+\alpha t)\\
      &\cdot\frac{(e^{\epsilon_1}+1)(e^{\epsilon_2}+k-1)}{e^{\epsilon_1}(e^{\epsilon_2}-1)}]\\
      &\ge(1-\beta)(1-\frac{1}{2\alpha}(1-\sqrt{1-\frac{4P_{weak}E(V)}{b-1}}))
      \end{align*}

\end{proof}

\subsection{Proof of Theorem \ref{the:advLdpPD-priv}}
\label{sec:Adv_privacy}
\begin{proof}
  Denote $v$ and $v'$ as two raw data, $o_1$, $o_2$ and $o_3$ as the output of mechanisms, $\Omega_h$ as the domain of hot items, $\Omega_c$ as the domain of cold items.
  Firstly, $\mathcal{M}_{judge}$ satisfies $\epsilon_1$-LDP, we have
  \begin{align*}
    \frac{Pr[o=``Hot"|v\in \Omega_h]}{Pr[o=``Hot"|v'\in \Omega_c]} \le \frac{p_1}{q_1}=e^{\epsilon_1}
  \end{align*}
  The same result can be proved when $o=$``Cold".

  Secondly, we prove that $\mathcal{M}_{hot}$ mechanism satisfies $\epsilon_2$-LDP.
  The probability ratio of $v$ and $v'$ to get the same randomized item $o_2\in \Omega_h$ is
  \begin{align*}
    \frac{Pr[o_2|v\in \Omega_h]}{Pr[o_2|v'\in \Omega_c]}\le \frac{p_2}{1/k}\le \frac{p_2}{q_2}=e^{\epsilon_2}
  \end{align*}
  Similarly, the $\mathcal{M}_{cold}$ mechanism satisfies 
  \begin{align*}
    \frac{Pr[o_3|v\in \Omega_c]}{Pr[o_3|v'\in \Omega_h]}\le e^{\epsilon_2}
  \end{align*}
  According to the composition theorem of DP \cite{DBLP:journals/fttcs/DworkR14}, BDR satisfies $(\epsilon_1+\epsilon_2)$-LDP at each timestamp where $\epsilon_1+\epsilon_2=\epsilon$.
  Therefore, BDR satisfies $\epsilon$-LDP.
\end{proof}

\subsection{Algorithm of Function $\textsf{DSR\_Insert}$ and $\textsf{DSR\_FinalDebias}$}
\label{app:func_dsr_insert}
\setlength{\floatsep}{0.1cm}
\setlength{\textfloatsep}{0.1cm}
\begin{algorithm}[t]
  \caption{$\textsf{DSR\_Insert}$($v$, $\mathcal{HG}$, $p_1$, $q_1$, $p_2$, $q_2$)}
  \label{alg:dsr-insertfunc}
  {\small{
  \begin{algorithmic}[1]
        \State {$\Omega_s=\{\mathcal{HG}.ID\}\cup\{\bot\}$}
        \If {the least count in $\mathcal{HG}$ changed from $> 1$ to $\le 1$}
        \For {each $\mathcal{HG}[j]\in \mathcal{HG}$}
        \State {$\mathcal{HG}[j]\leftarrow (\mathcal{HG}[j].C-q_1)(p_2-q_2)/(p_1-q_1)$}
        \EndFor
        \State {Insert $v$ into $\mathcal{\mathcal{HG}}$ following ED strategy;} \Comment {$r_i^t\in\Omega$}
        \State {$num_{entire}\leftarrow num_{entire}+1$}
        \If {the least count in $\mathcal{HG}$ $\le 0$}
        \State {Replace the weakest KV pair with $<r_i^t, 1-p_1\cdot num_{entire}/(p_1-q_1)-p_2\cdot num_{reduced}/(p_2-q_2)>$}
        \EndIf
        \ElsIf {the least count in $\mathcal{HG}$ changed from $\le 1$ to $> 1$}
        \For {each $\mathcal{HG}[j]\in \mathcal{HG}$}
        \State {$\mathcal{HG}[j]\leftarrow (\mathcal{HG}[j].C-q_2)(p_1-q_1)/(p_2-q_2)$}
        \EndFor
        \State {Insert $v$ into $\mathcal{\mathcal{HG}}$ following ED strategy;} \Comment {$r_i^t\in\Omega_s$}
        \State {$num_{reduced}\leftarrow num_{reduced}+1$}
        \ElsIf {the least count in $\mathcal{HG}>1$}
        \For {each $\mathcal{HG}[j]\in \mathcal{HG}$}
        \State {$\mathcal{HG}[j]\leftarrow \mathcal{HG}[j].C-q_1$}
        \EndFor
        \State {Insert $v$ into $\mathcal{\mathcal{HG}}$ following ED strategy;} \Comment {$r_i^t\in\Omega_s$}
        \State {$num_{reduced}\leftarrow num_{reduced}+1$}
        \Else
        \For {each $\mathcal{HG}[j]\in \mathcal{HG}$}
        \State {$\mathcal{HG}[j]\leftarrow \mathcal{HG}[j].C-q_2$}
        \EndFor
        \State {Insert $v$ into $\mathcal{\mathcal{HG}}$ following ED strategy;} \Comment {$r_i^t\in\Omega$}
        \State {$num_{entire}\leftarrow num_{entire}+1$}
        \If {the least count in $\mathcal{HG}$ $\le 0$}
        \State {Replace the weakest KV pair with $<r_i^t, 1-p_1\cdot num_{entire}/(p_1-q_1)-p_2\cdot num_{reduced}/(p_2-q_2)>$}
        \EndIf
        \EndIf\\
        \Return {Updated $\mathcal{HG}$}
  \end{algorithmic}}}
\end{algorithm}

\setlength{\floatsep}{0.1cm}
\setlength{\textfloatsep}{0.1cm}
\begin{algorithm}[t]
  \caption{$\textsf{DSR\_FinalDebias}(\mathcal{HG},\ p_1,\ q_1,\ p_2,\ q_2)$}
  \label{alg:dsr-finaldebias}
  {\small{
  \begin{algorithmic}[1]
        \State {\# Complete final debias based on the state of the previous $\mathcal{HG}$.}
        \If {the least count in $\mathcal{HG}$ changed from $> 1$ to $\le 1$}
        \For {each $\mathcal{HG}[j]\in \mathcal{HG}$}
        \State {$\mathcal{HG}[j]\leftarrow \mathcal{HG}[j].C/(p_2-q_2)$}
        \EndFor
        \ElsIf {the least count in $\mathcal{HG}$ changed from $\le 1$ to $> 1$}
        \For {each $\mathcal{HG}[j]\in \mathcal{HG}$}
        \State {$\mathcal{HG}[j]\leftarrow \mathcal{HG}[j].C/(p_1-q_1)$}
        \EndFor
        \ElsIf {the least count in $\mathcal{HG}>1$}
        \For {each $\mathcal{HG}[j]\in \mathcal{HG}$}
        \State {$\mathcal{HG}[j]\leftarrow \mathcal{HG}[j].C/(p_1-q_1)$}
        \EndFor
        \Else
        \For {each $\mathcal{HG}[j]\in \mathcal{HG}$}
        \State {$\mathcal{HG}[j]\leftarrow \mathcal{HG}[j].C/(p_2-q_2)$}
        \EndFor
        \EndIf\\
        \Return {Updated $\mathcal{HG}$}
  \end{algorithmic}}}
\end{algorithm}

\subsection{Normalized Discounted Cumulative Gain (NDCG)}
\label{app:ndcg}
It measures the ordering quality of the heavy hitters captured by $\mathcal{HG}$, which is a common effectiveness in recommendation systems and other related applications.
Specifically, let $V=\{v_1,v_2,...,v_k\}$ as the Top-$k$ heavy hitters in $\mathcal{HG}$.
If $v_i$ is one of a true Top-$k$ heavy hitter, the relevance score $rel_i$ is 
\[rel_{v_i}=|k-|rank_{actual}(v_i)-rank_{estimated}(v_i)||.\]
If $v_i$ is not a true Top-$k$ heavy hitter, we directly set its $rel_i$ as $0$.
Then, the Discounted Cumulative Gain ($DCG$) is
\[DCG_k=rel_{v_1}+\sum_{i=2}^{k}\frac{rel_{v_i}}{\log_2(i)}.\]
Finally, we normalize the $DCG$ of $\mathcal{HG}$ by comparing it with the Ideal $DCG$ ($IDCG$), which is the DCG when $\mathcal{HG}$ records an actual list of Top-$k$ heavy hitters.
\[NDCG_{k}=\frac{DCG_k}{IDCG_k}.\]
$NDCG_k$ is between $0$ and $1$ for all $k$, and the closer it is to $1$ means the ordering quality of $\mathcal{HG}$ is higher.

\subsection{Implementation Details}
\label{app:imple-detail}

\subsubsection{Re-implementation for LDP Mechansims.}

We treat existing LDP frequency estimation approaches as privacy-preserving baselines. 
Specifically, we estimate the frequency of all items under LDP, and output the Top-$k$ counts as the heavy hitter. Cormode, Maddock, and Maple \cite{DBLP:journals/pvldb/CormodeMM21} placed various LDP frequency estimation approaches into a common framework, and performed an series of experiments in Python\footnote{\url{https://github.com/Samuel-Maddock/pure-LDP}}. 
Their work offered a starting point of our implementations. 

We carefully studied the source codes, and fully re-implemented all baseline LDP mechanisms with the following optimizations.

\noindent \textbf{Data serialization}. In \cite{DBLP:journals/pvldb/CormodeMM21}, the client outputs the perturbed data as an object, which server takes as its input to do data aggregation. 
In practice, the server and the client would communicate via a network channel. 
This requires object serializations and introduces additional communication and computation costs. 
In our implementation, we manually serialize the perturbed data to `byte[]' based on the underlying approaches. If the client outputs bit strings (e.g., OUE and RAPPOR), we compress the output bit string by representing each 8 bits into 1 byte. If the client outputs integers (e.g., OLH, HR), we represent the integer with the minimal byte length, i.e., 1-4 byte for integers in range $[0, 2^8)$, $[0, 2^{16})$, $[0, 2^{24})$, and $[0, 2^{32})$, respectively.

\noindent \textbf{Choices of the hash}. Some frequency estimation approaches leverage (non-cryptographic) hash to map input to Boolean (BLH) or integer(s) (RAPPOR, OLH, HCMS). 
The performances of these approaches are greately affected by the efficiency of the underlying hash. 
Meanwhile, HeavyGuardian also leverages (non-cryptographic) hash to partition data into buckets. 
Note that \cite{DBLP:journals/pvldb/CormodeMM21} and \cite{DBLP:conf/kdd/0003GZZSL18} respectively use xxHash and BobHash. 
We invoke BobHash in all schemes since our test shows that BobHash are more efficient\footnote{Our test shows that on MacBook M1, xxHash takes about $0.01$us to provide an output while BobHash takes about $0.1$us.}.

Besides, we find that debiasing the randomized data before storing it into \emph{HeavyGuardian} can avoid the bias introduced by the ED strategy being amplified by the debiasing process.
However, if a complete debiasing is performed every time randomized data comes, it can cause the previously accumulated count to be debiased repeatedly, e.g., divided by denominator $p-q$ of the debiasing formula repeatedly in BGR.
Therefore, we perform partial debiasing when collecting and dividing all counts by the denominator of the debiasing formula only before publishing the results.

\begin{figure}[h]
  \centering
  \includegraphics[width=0.4\textwidth]{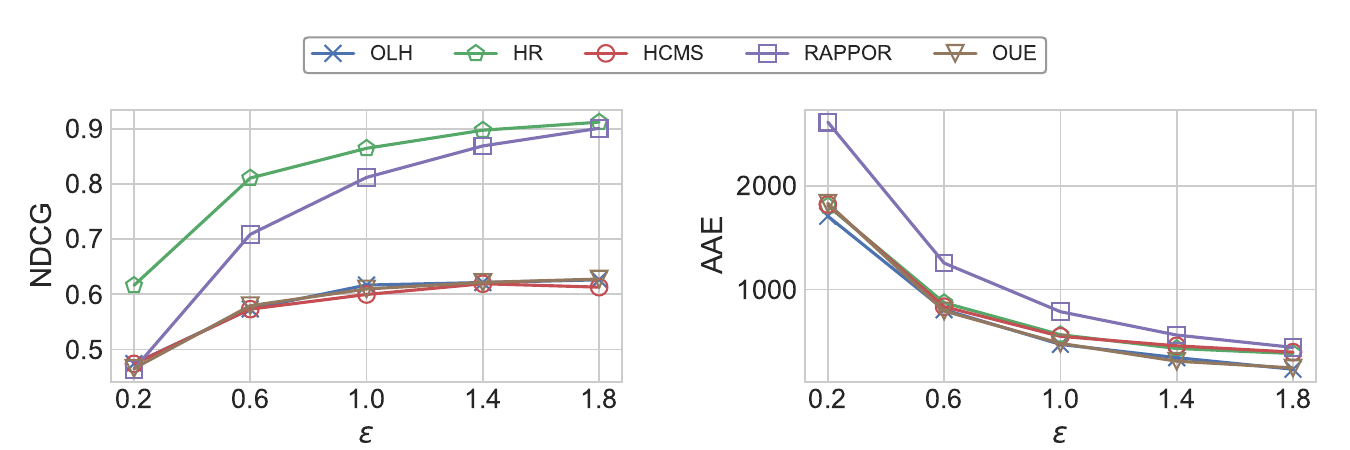}
  \caption{The comparison of different LDP mechanisms on a synthetic dataset with Normal distribution.}
  \label{fig:LDP_mech}
\end{figure}

\subsubsection{Re-implementation for HeavyGuardian}
We treat the original \emph{HeavyGuardian} as the non-private baseline. 
We carefully studied existing open-source C/C++ codes provided by Yang et al.\footnote{\url{https://github.com/Gavindeed/HeavyGuardian}} and fully re-implemented the \emph{HeavyGuardian} using Java. 
Our \emph{HeavyGuardian} re-implementation has some improvements compared with the original implementation.
First, the original implementation contains some hard-coded parameters for different tasks, while our re-implementation allows developers to dynamically config for different tasks. 
Second, the ED strategy in \emph{HeavyGuardian} contains a Bernoulli sampling procedure, i.e., sampling a Boolean value with probability $\mathcal{P} = b^{-C}$ being \textsf{True}, where $b > 1$ is a predefined constant number, and $C$ is a counting value. 
The naive method of sampling used in original \emph{HeavyGuardian} implementation is to randomly sample $r \in [0, 1)$ and test whether $r < b^{-C}$. 
However, since finite computers cannot faithfully represent real numbers, the naive method would not produce the Boolean value with the correct distribution. 
In our implementation, we parse $b^{-C} = \exp(-C \cdot \ln(b))$ and leverage the method of $\mathsf{Bernoulli}(\exp(-\gamma))$ proposed by Ganonne et al. \cite{DBLP:conf/nips/Canonne0S20} to do the sampling with no loss in accuracy.

Recall that the basic version of \emph{HeavyGuardian} is a hash table with $w \geq 1$ buckets storing KV pairs ($\langle ID, count \rangle$). 
Each bucket is divided into the heavy part with size $\lambda_h > 0$ and the light part with size $\lambda_l \geq 0$. 
Because heavy hitter detection focus on only hot items, Yang et al. \cite{DBLP:conf/kdd/0003GZZSL18} recommend setting $\lambda_l = 0$ when using \emph{HeavyGuardian} for heavy hitter detection tasks. 
Our experiments follow this recommendation and set $\lambda_l = 0$ (except CNR).
The basic version of \emph{HeavyGuardian} also allows using different $\lambda_h$ and different numbers of buckets $w$ when counting the most frequent $k$ items in the heavy hitter task. 
Although our implementation also allows setting $w$ and $\lambda_h$, our experiments focus on the basic cases, i.e., $w = 1$ and $\lambda_h = k$, to better demonstrate the effectiveness of our schemes. 

We obtain memory consumption by measuring the deep sizes (i.e., the size of an object including the size of all referred objects, in addition to the size of the object itself) of Objects packaging the \emph{HeavyGuardian} and our schemes. 
The tool we use is the JOL (Java Object Layout) library\footnote{\url{http://hg.openjdk.java.net/code-tools/jol}}. 
Although the error bounds we give in the theoretical analyses are debiasing after storing, the actual error in our implementation is still bounded by and even lower than the theoretical results.

\subsection{Supplementary Experiments}
\label{app:supp_exp}


\begin{figure*}[t]
  \centering
  \includegraphics[width=1\textwidth]{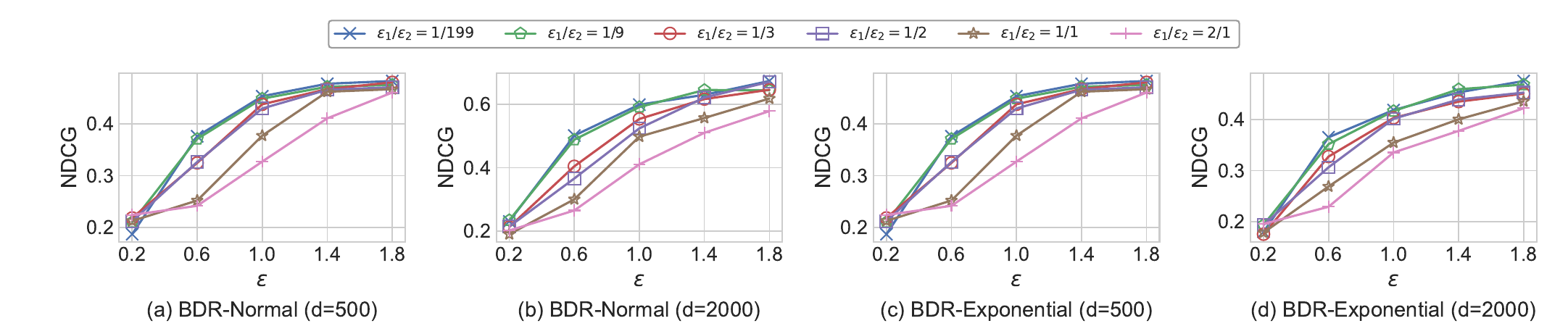}
  \caption{The impact of different allocations of privacy budget on NDCG of BDR. The evaluation is conducted on the Synthetic dataset of Normal distribution and Exponential distribution with domain size $500$ and $2000$, while taking $1\%$ data for warm-up stage.}  
  \label{fig:adv_ndcg}
\end{figure*}

\begin{figure*}[t]
  \centering
  \includegraphics[width=1\textwidth]{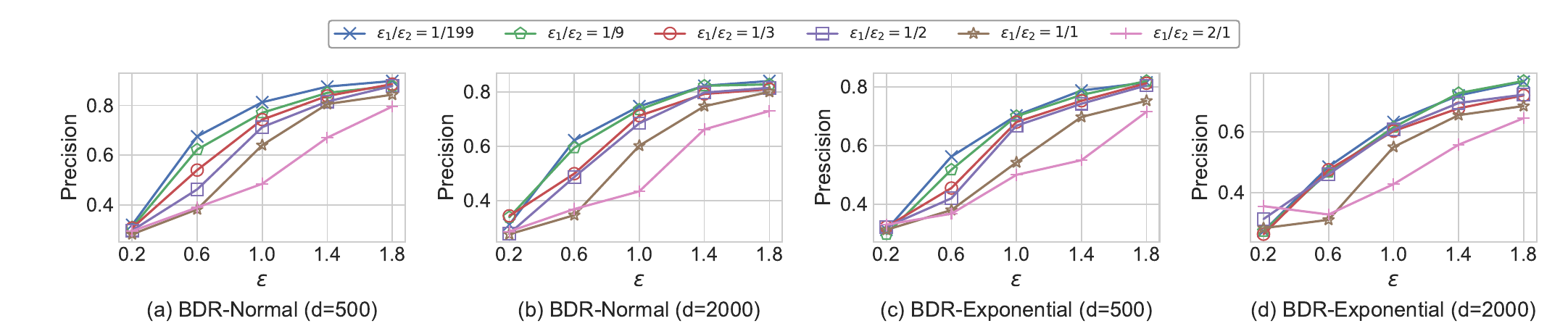}
  \vspace{-0.1in}
  \caption{The impact of different allocations of privacy budget on Precision of BDR. The evaluation is conducted on the Synthetic dataset of Normal distribution and Exponential distribution with domain size $500$ and $2000$, while taking $1\%$ data for warm-up stage.}
  \label{fig:adv_pre}
\end{figure*}

\begin{figure*}[t]
  \centering
  \includegraphics[width=1\textwidth]{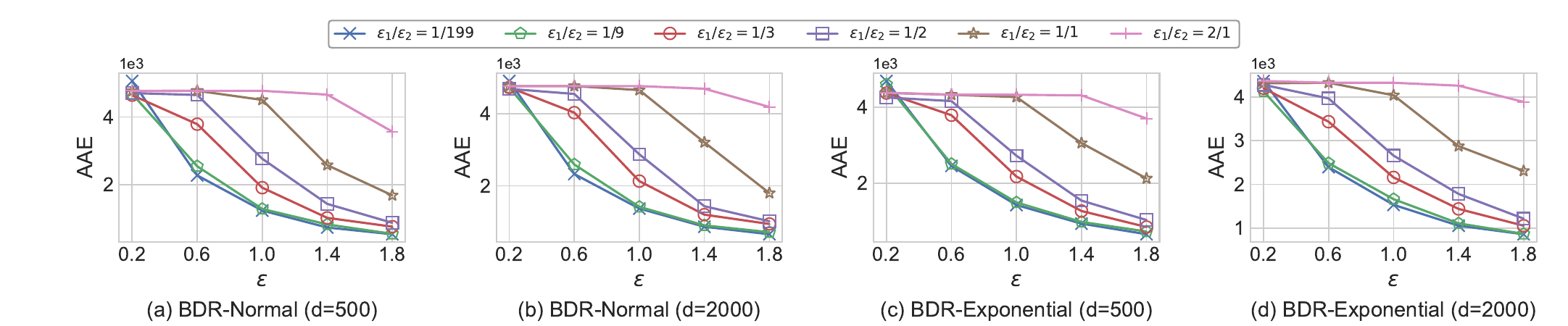}
  \vspace{-0.1in}
  \caption{The impact of different allocations of privacy budget on AAE of BDR. The evaluation is conducted on the Synthetic dataset of Normal distribution and Exponential distribution with domain size $500$ and $2000$, while taking $1\%$ data for warm-up stage.}
  \label{fig:adv_re}
\end{figure*}

\begin{figure*}[t]
  \centering
  \includegraphics[width=1\textwidth]{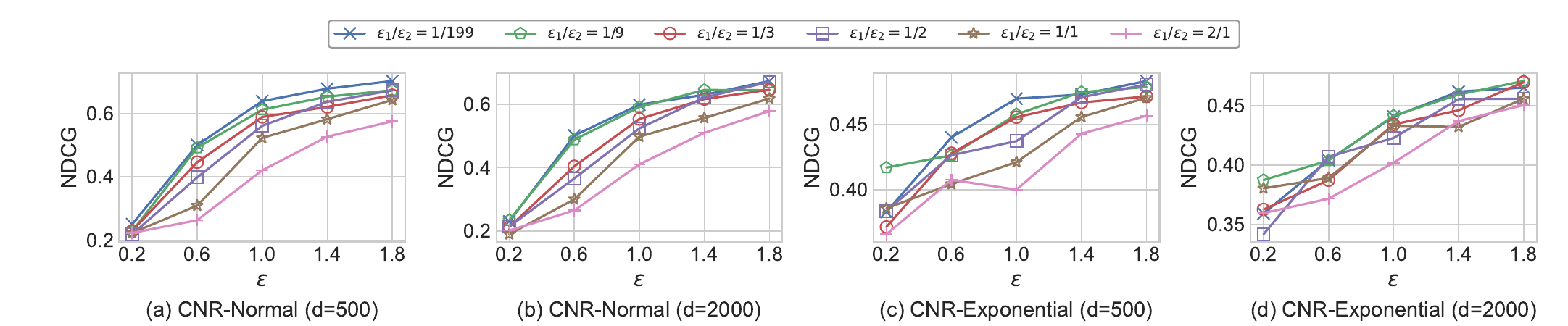}
  \vspace{-0.1in}
  \caption{The impact of different allocations of privacy budget on NDCG of CNR. The evaluation is conducted on the Synthetic dataset of Normal distribution and Exponential distribution with domain size $500$ and $2000$, while taking $1\%$ data for warm-up stage.}
  \label{fig:buff_ndcg}
\end{figure*}

\begin{figure*}[t]
  \centering
  \includegraphics[width=1\textwidth]{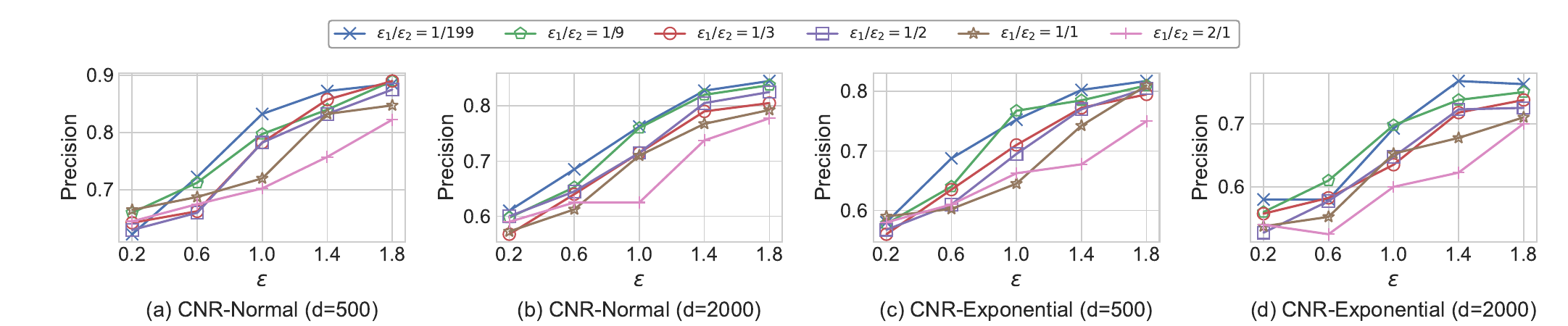}
  \vspace{-0.1in}
  \caption{The impact of different allocations of privacy budget on Precision of CNR. The evaluation is conducted on the Synthetic dataset of Normal distribution and Exponential distribution with domain size $500$ and $2000$, while taking $1\%$ data for warm-up stage.}
  \label{fig:buff_pre}
\end{figure*}

\begin{figure*}[t]
  \centering
  \includegraphics[width=1\textwidth]{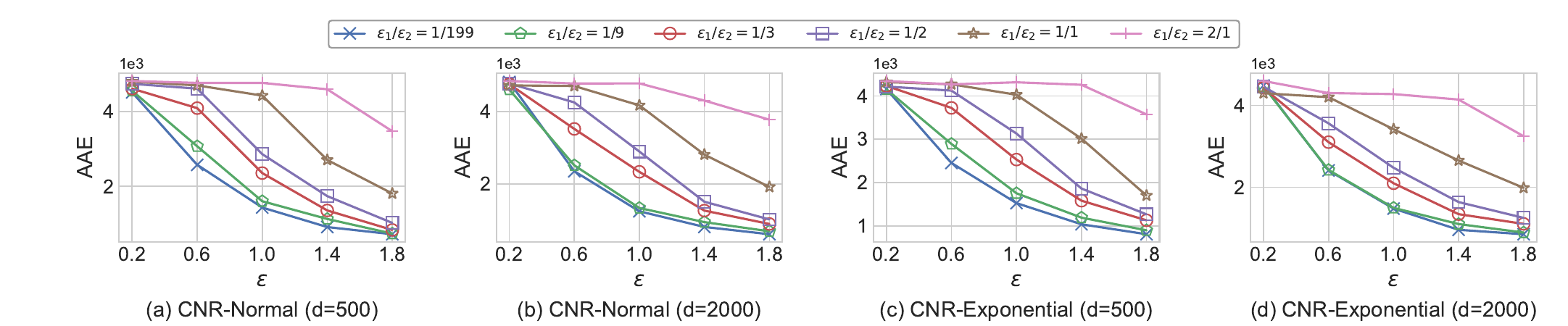}
  \vspace{-0.1in}
  \caption{The impact of different allocations of privacy budget on AAE of CNR. The evaluation is conducted on the Synthetic dataset of Normal distribution and Exponential distribution with domain size $500$ and $2000$, while taking $1\%$ data for warm-up stage.}
  \label{fig:buff_re}
\end{figure*}

\begin{figure*}[t]
  \centering
  \includegraphics[width=0.9\textwidth]{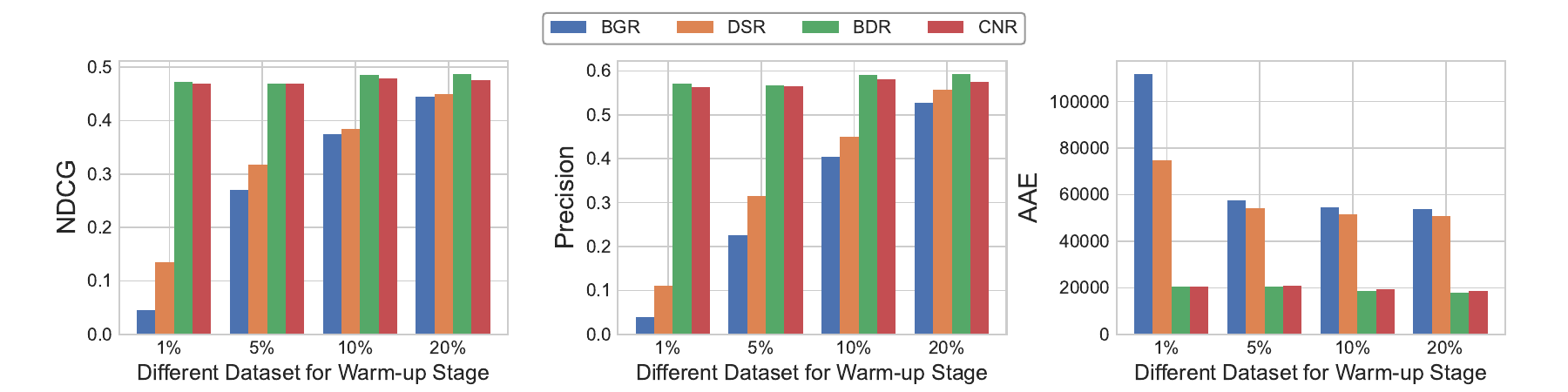}
   \vspace{-0.1in}
  \caption{The impact of the warm-up stage on the accuracy of the proposed schemes for tracking Top-$20$ items on Kosarak dataset, where $\epsilon=2$.}
  \label{fig:warmup_kosarak}
\end{figure*}

\begin{table*}[!h]
  \centering
  \caption{Total communication overheads (MB) consumed by the baseline and proposed three schemes for deriving the Top-$20$ on \textbf{each entire dataset}. The left side of ``/" is the \textbf{uploading} overheads for users, while the right side is the \textbf{downloading} overheads for users.}
  \label{tab:coummunication}

\begin{tabular}{|c|c|c|c|}
  
      \hline
       \diagbox{Scheme}{Dataset} & Kosarak ($30.5M$)& Webdocs ($1413.24M$)\\
       \hline
       \hline
       \textbf{BGR} & \textbf{35.09}/\textbf{0} & \textbf{1883.36}/\textbf{0}\\
       \hline
       \textbf{DSR} & \textbf{35.07}/\textbf{1062.79} & \textbf{1875.19}/\textbf{49891.47}\\
       \hline
       \textbf{BDR} & \textbf{26.67}/\textbf{709.04} & \textbf{1668.71}/\textbf{26295.89}\\
       \hline
       \textbf{CNR} & \textbf{26.49}/\textbf{698.95} & \textbf{1680.25}/\textbf{26861.62}\\
      \hline
  \end{tabular}
\end{table*}

\begin{table*}[!h]
  \centering
  \caption{Total server runtime (second) of the baseline and three proposed schemes for deriving the Top-$20$ on \textbf{each entire dataset}.}
  \label{tab:time-server}
  \begin{tabular}{|c|c|c|c|}
      \hline
       \diagbox{Scheme}{Dataset} & Kosarak & Webdocs \\
       \hline
       \hline
       \textbf{BGR} & \textbf{5.375} & \textbf{215.814}\\
       \hline
       \textbf{DSR} & \textbf{5.586} & \textbf{245.314}\\
       \hline
       \textbf{BDR} & \textbf{5.224} & \textbf{257.32}\\
       \hline
       \textbf{CNR} & \textbf{7.563} & \textbf{417.184}\\
      \hline
  \end{tabular}
\end{table*}

\begin{table*}[!h]
  \centering
  \caption{Total client runtime (second) of the baseline and three proposed schemes for deriving the Top-$20$ on \textbf{each entire dataset}.}
  \label{tab:time-client}
  \begin{tabular}{|c|c|c|c|}
      \hline
       \diagbox{Scheme}{Dataset} & Kosarak & Webdocs \\
       \hline
       \textbf{BGR} & \textbf{1.88} & \textbf{184.481}\\
       \hline
       \textbf{DSR} & \textbf{2.929} & \textbf{226.501}\\
       \hline
       \textbf{BDR} & \textbf{3.549} & \textbf{197.065}\\
       \hline
       \textbf{CNR} & \textbf{3.653} & \textbf{196.88}\\
      \hline
  \end{tabular}
\end{table*}

\subsection{Extended Related Work}
\label{app:related}
\noindent\textbf{Differential Private Data Stream Collection}
The earliest studies in differential privacy for streaming data collection originate from continuous observation of private data \cite{evfimievski2003limiting, bansal2008improved, erlingsson2014rappor, ding2017collecting, dwork2010differential}.
Long-term data collection from users can be regarded as the collection of data streams.
These studies mainly consider the degradation of privacy guarantee due to the repeated appearance of private data when the user's state not changing for a period of time.

Recent works on differential private data stream collection mainly focus on Centralized Differential Privacy (CDP).
Some works study how to publish the summation of the streaming data privately.
To avoid the overestimation of sensitivity of the streaming data caused by outliers, Perrier et al. \cite{perrier2018private} propose truncating the data exceeding a threshold to reduce the sensitivity. 
Wang et al. \cite{wang2021continuous} point out that the threshold should be dynamically adjusted for different distributions instead of using the fixed quantile value.
Therefore, they propose to use an exponential mechanism to get a more reasonable sensitivity.
Some works are devoted to solving the problem that the privacy guarantee is continuously degraded due to the repeated use of data in streams in consecutive periods.
kellaris et al. \cite{DBLP:journals/pvldb/KellarisPXP14} propose $w$-event DP, regarding the statistical results published in a sliding window as an event.
Farokhi et al. \cite{farokhi2020temporally} propose to address the problem of the exploding privacy budget by reducing the privacy guarantee provided for data that appeared in the past over time.
Some works study the release of correlated streaming data. 
Wang et al. \cite{wang2017cts} propose a correlated Laplace noise mechanism, and Bao et al. \cite{bao2021cgm} propose a correlated Gaussian noise mechanism.

There are also some recent works on data stream collection with Local Differential Privacy (LDP) \cite{joseph2018local, DBLP:conf/sigmod/RenSYYZX22, wang2021continuous}.
Joseph et al. \cite{joseph2018local} design a protocol to submit LDP-protected data only when the streaming data has changed and can have a greater impact on the statistical results.
Ren et al. \cite{DBLP:conf/sigmod/RenSYYZX22} propose a privacy budget segmentation framework that provides $w$-event LDP protection to prevent the degradation of privacy due to continuous data collection.
Besides, Wang et al. \cite{wang2021continuous} extend the proposed truncation-based CDP mechanism to LDP for the release of streaming data.

\noindent\textbf{Tracking Heavy Hitters in Data stream}
Mining streaming data faces three principal challenges: \emph{volume}, \emph{velocity}, and \emph{volatility} \cite{krempl2014open}.
The existing heavy hitters estimation algorithms in the data stream can be divided into three classes: Counter-based algorithms, Quantile algorithms, and Sketch algorithms \cite{DBLP:journals/pvldb/CormodeH08}. 
Counter-based algorithms track the subset of items in the stream, and they quickly determine whether to record and how to record with each new arrival data.
Manku et al. \cite{DBLP:conf/vldb/MankuM02} propose two algorithms Sticky Sampling and Lossy Counting, which only record the item and its counts whose estimated counts exceed the threshold.
Metwally et al. \cite{DBLP:conf/icdt/MetwallyAA05} design an algorithm called Space-Saving and record data with a data structure called Stream-Summary, which achieves rapid deletion, update, and insertion for each new arrival data.
Subsequently, Yang et al. \cite{DBLP:conf/kdd/0003GZZSL18} propose a new algorithm called HeavyGuardian to improve Space-Saving.
Zhou et al. \cite{DBLP:conf/sigmod/Zhou0J0YLU18} also propose a framework called Cold Filter (CF) to improve Space-Saving.
The Quantile algorithms such as the GK algorithm \cite{greenwald2001space} and QDigest algorithm \cite{shrivastava2004medians}, focus on finding the item which is the smallest item that dominates $\phi n$ items from the data stream.
Sketch algorithms record items with a data structure, which can be thought of as a linear projection of the input, hash functions are usually used to define the linear projection.
Some existing works include Count Sketch\cite{alon1999space}, CountMin Sketch\cite{cormode2005improved}, WavingSketch\cite{DBLP:conf/kdd/LiLXJ00DZ20}, and Moment\cite{chi2004moment}, etc.
However, the sketch algorithms may involve a large number of hash operations, which cannot meet the timeliness requirements of streaming data.
Besides, the unimportant low-frequent items are all recorded, which leads to unnecessary memory consumption.
Our design is based on Counter-based algorithms with an extended setting where streaming data is protected by LDP.

\end{document}
\endinput